\newtheorem{thm}{Theorem}
\newtheorem*{thm*}{Theorem}
\newcommand{\setthmtag}[1]{%
  \let\oldthethm\thethm%
  \newcommand{\thethm}{#1}%
  \g@addto@macro\endthm{%
    \addtocounter{thm}{-1}%
    \global\let\thethm\oldthethm}%
  }
\newtheorem{prop}[thm]{Proposition}
\newtheorem*{prop*}{Proposition}
\newtheorem{lemma}[thm]{Lemma}
\newtheorem*{lemma*}{Lemma}
\newtheorem{cor}[thm]{Corollary}
\newtheorem*{cor*}{Corollary}
\newtheorem*{cj*}{Conjecture}
\newtheorem*{Def*}{Definition}
\theoremstyle{definition}
\newtheorem*{rem*}{Remark}
\def\beq{\begin{equation}}
\def\eeq{\end{equation}}
\def\bq{\begin{quote}}
\def\eq{\end{quote}}
\def\ben{\begin{enumerate}}
\def\een{\end{enumerate}}
\def\bit{\begin{itemize}}
\def\eit{\end{itemize}}
\def\lb{\left(}
\def\rb{\right)}
\def\r|{\right|}
\newcommand\cN{\mathcal{N}}
\newcommand\cM{\mathcal{M}}
\newcommand\cP{\mathcal{P}}
\newcommand\cA{\mathcal{A}}
\newcommand{\cQ}{\mathcal{Q}}
\newcommand{\ketbra}[2]{|#1\rangle\langle #2|}
\newcommand{\tr}[1]{\operatorname{tr}\lb#1\rb}
\newcommand{\id}{\text{id}}
\newcommand{\cC}{\mathcal{C}}
\newcommand{\cT}{\mathcal{T}}
\newcommand\be{\begin{equation}}
\newcommand\ee{\end{equation}}
\begin{document}
\title{Efficient learning of the structure and parameters of local Pauli noise channels}

\author{\begingroup
\hypersetup{urlcolor=navyblue}
\href{https://orcid.org/0000-0001-7712-6582}{Cambyse Rouz\'{e}
\endgroup}
}
\email[Cambyse Rouz\'{e} ]{cambyse.rouze@tum.de}
 \affiliation{Zentrum Mathematik, Technische Universit\"{a}t M\"{u}nchen, 85748 Garching, Germany}

\author{\begingroup
\hypersetup{urlcolor=navyblue}
\href{https://orcid.org/0000-0001-9699-5994}{Daniel Stilck Fran\c{c}a}
\endgroup}
\affiliation{Univ Lyon, ENS Lyon, UCBL, CNRS, Inria, LIP, F-69342, Lyon Cedex 07, France}
\email[Daniel Stilck Fran\c ca ]{daniel.stilck\_franca@ens-lyon.fr}

\begin{abstract}
  The unavoidable presence of noise is a crucial roadblock for the development of large-scale quantum computers and the ability to characterize quantum noise reliably and efficiently with high precision is essential to scale quantum technologies further. Although estimating an arbitrary quantum channel requires exponential resources, it is expected that physically relevant noise has some underlying local structure, for instance that errors across different qubits have a conditional independence structure. Previous works showed how it is possible to estimate Pauli noise channels with an efficient number of samples in a way that is robust to state preparation and measurement errors, albeit departing from a known conditional independence structure.

  We present a novel approach for learning Pauli noise channels over n qubits that addresses this shortcoming. Unlike previous works that focused on learning coefficients with a known conditional independence structure, our method learns both the coefficients and the underlying structure. We achieve our results by leveraging a groundbreaking result by Bresler for efficiently learning Gibbs measures and obtain an optimal sample complexity of $\mathcal{O}(\log(n))$ to learn the unknown structure of the noise acting on $n$ qubits. This information can then be leveraged to obtain a description of the channel that is close in diamond distance from $\mathcal{O}(\textrm{poly}(n))$ samples. Furthermore, our method is efficient both in the number of samples and postprocessing without giving up on other desirable features such as SPAM-robustness, and only requires the implementation of single qubit Cliffords. In light of this, our novel approach enables the large-scale characterization of Pauli noise in quantum devices under minimal experimental requirements and assumptions.
\end{abstract}

\maketitle

\section{Introduction}
With the increasing size and quality of quantum devices in the last years, it becomes increasingly important to develop tools that can characterize and identify multiple (possibly correlated) error rates. One particular class of channels that has received considerable attention over the recent years is that of Pauli channels~\cite{fawzi2023lower,flammia2020efficient,chen2022quantum,flammia2021pauli,flammia2020efficient,PhysRevLett.130.200601}. This is justified by the fact that randomized compiling techniques can bring the noise affecting a quantum device into this normal form~\cite{PhysRevA.94.052325} and that it models well logical errors~\cite{PhysRevLett.130.200601}. Furthermore, Pauli channels have much more mathematical structure than arbitrary quantum channels, making it easier to treat them analytically. Unfortunately, even the simpler class of Pauli channels requires a number of samples that scales exponentially in the number of qubits~\cite{fawzi2023lower} to be characterized globally, i.e. in the diamond norm. 

However, it is expected that noise in physical systems will exhibit some locality, i.e.~the probability of different Pauli errors exhibits some conditional independence structure, as illustrated and explained further in Fig.~\ref{Fig1}. In this case, given a hypergraph that models the conditional independence structure of different Pauli errors and assuming that the size of the largest hyperedge is constant, the authors of \cite{harper2020efficient,flammia2020efficient,harper2023learning} showed how to learn the quantum channel with a number of samples that scales polynomially in the system's size. Furthermore, they can achieve this even in a way that is robust to state preparation and measurement errors (SPAM) and without using auxiliary qubits. However, these works need to assume that the conditional independence structure is known in advance, which limits the application of their methods in practice. In this work, we address and solve this bottleneck in estimating Pauli noise by devising a learning protocol that is both computationally efficient and can learn the conditional independence structure. Furthermore, our method shares the desirable features of previous protocols, namely efficient sample complexity and robustness to SPAM.

\begin{figure}
\centering
\includegraphics[scale=0.1]{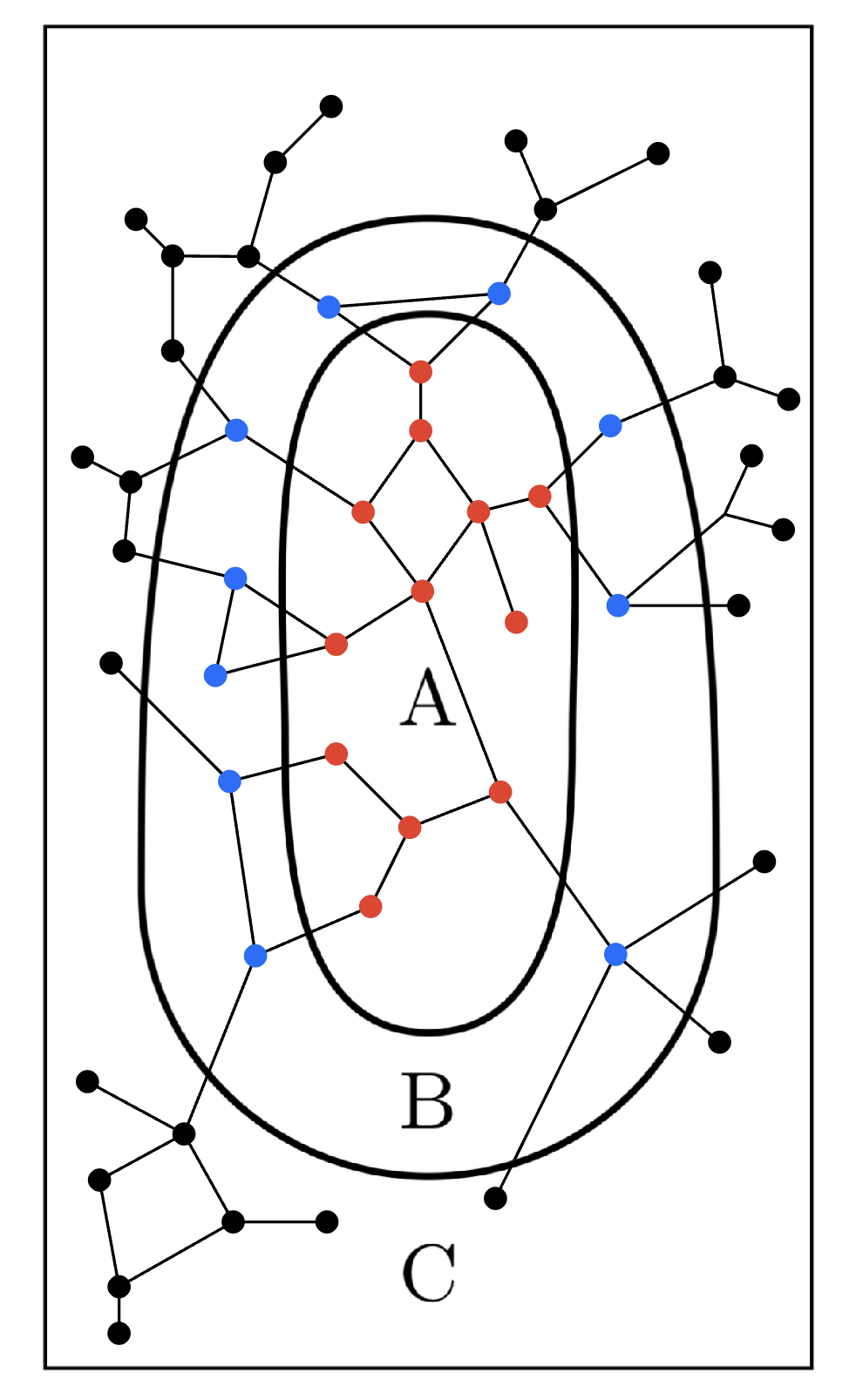}
\caption{The region $B$ of qubits (in blue) shields regions $A$ (in red) and $C$ (in black) away from each other. A measure $\mu$ satisfies conditional independence if $\mu(P_A,P_C|P_B)=\mu(P_A|P_B)\mu(P_C|P_B)$ for any such regions. In terms of Pauli errors, it means that given the errors that occurred in $B$, the distributions of errors in $A$ and $C$ is independent.}\label{Fig1}
\end{figure}

To achieve our results, we draw upon recent results on learning Markovian fields~\cite{bresler2015efficiently} and adapt them to our setting. The main technical difficulty we overcome is that these classical works assume access to samples from the underlying distribution. In contrast, without the use of auxiliary qubits it is not possible to sample from the probability distribution of various Pauli errors. To overcome this difficulty, we develop a new method that combines features from randomized benchmarking protocols~\cite{francca2018approximate,helsen2022general,helsen2019new,magesan2012characterizing,heinrich2023randomized} together with shadow process tomography protocols~\cite{franca2022efficient,kunjummen2023shadow,levy2021classical}, and which only requires the implementation of single qubit Cliffords. Akin to process shadows, our protocol allows for the estimation of all marginals of Pauli errors of a given size with a number of samples that scales logarithmically in the number of quits and exponentially in the size of the marginals. And like randomized benchmarking, this is done in a SPAM-robust way.
We then show how to run classical algorithms for learning Markovian fields solely with estimates of the marginals.

As our protocol checks all boxes for a practical and effective Pauli learning algorithm (no auxiliary systems required, SPAM-robustness, only requires single qubit Cliffords, optimal sample and efficient computational complexity and no knowledge of conditional independence structure required), we believe it will find widespread applicability to characterize large scale quantum devices.

\section{Preliminaries}
Let $d=2^n$ be the dimension of an $n$-qubit system. We use the notation $[d] := \{1,\dots,d\}$. A quantum state is a positive semi-definite Hermitian matrix of trace $1$. We denote the identity matrix by $\mathds{I}_d\in\mathds{C}^{d\times d}$ and by $\id_d: \mathds{C}^{d\times d}\rightarrow \mathds{C}^{d\times d}$ the identity map, and omit the subscript if the dimension is clear from context. A quantum channel is a map $\cN: \mathds{C}^{d\times d}\rightarrow \mathds{C}^{d\times d}$ of the form $\cN(\rho)=\sum_{k}A_k \rho A_k^\dagger$ where the Kraus operators $\{A_k\}_k$ satisfy $\sum_k A_k^\dagger A_k=\mathds{I}$. If the quantum channel $\cN$ satisfies further $\cN(\mathds{I})=\mathds{I}$, it is called \textit{unital}.

We define the diamond distance between two quantum channels $\cN$ and $\cM$ as the diamond norm of their difference:
\begin{align*}
    \|\cN-\cM\|_\diamond= \max_{\phi: \tr{\ketbra{\phi}{\phi}}=1}\|\id_d\otimes(\cN-\cM)(\ket{\phi}\bra{\phi}) \|_1
\end{align*}
where the the Schatten $1$-norm of a matrix $M$ is defined as $\|M\|_1=\operatorname{tr}(M^\dagger M)$.

Pauli channels are a special class of quantum channels whose Kraus operators are weighted Pauli operators. Formally, a Pauli quantum channel $\cP$ can be written as follows:
\begin{align}\label{equ:pauli_channel}
    \cP(\rho)= \sum_{P\in \mathds{P}_n} \mu(P) \,P \rho P
\end{align}
where we recall that the Pauli matrices $X,Y,Z$ take the form
\begin{equation}
X=\begin{pmatrix}
0 & 1 \\
1& 0
\end{pmatrix},\quad Y=\begin{pmatrix}
0 & -i \\
i & 0
\end{pmatrix},\quad Z=\begin{pmatrix}
1 & 0 \\
0 & -1
\end{pmatrix} \,,
\end{equation}
and where $\mu$ is a probability distribution over the set $\mathds{P}_n= \{\mathds{I},X,Y,Z\}^{\otimes n}$ of $n$-fold tensor products of Pauli matrices (or Pauli strings).

 The elements of $\mathds{P}_n$ either commute or anticommute: let $P$ and $Q$ be two Pauli operators, we have $PQ=(-1)^{P.Q}QP$, where $P.Q\coloneqq 0$ if $[P,Q]=0$ and $P.Q\coloneqq 1$ otherwise. Furthermore, we will denote the number of non-identity elements of a Pauli string $P$, its weight, by $w(P)$.

We consider the Pauli channel tomography problem which consists of learning a Pauli channel in the diamond norm. Given a precision  parameter $\epsilon>0$, the goal is to construct a Pauli channel $\widehat{\cP}$ satisfying with at least a probability $2/3$:
\begin{align*}
    \|\cP-\widehat{\cP}\|_\diamond\le \epsilon.
\end{align*}
An algorithm $\cA$ is $1/3$-correct for this problem if it outputs a Pauli channel $\epsilon$-close to $\cP$ with a probability of error at most $1/3$.
We choose to learn in the diamond norm because it characterizes the minimal error probability to distinguish between two quantum channels  when auxiliary systems are allowed \cite{watrous2018theory}. Since the diamond norm between two Pauli channels is exactly twice the total variation distance between their corresponding probability distributions \cite{magesan2012characterizing}, approximating the Pauli channel $\cP$ in diamond norm is equivalent to approximating the probability distribution $\mu$ in total variation distance. The latter is defined for two probability distributions $p$ and $q $ on $[d]$ as follows:
\begin{align*}
    \|p-q\|_{\operatorname{TV}}=\frac{1}{2}\sum_{i=1}^d |p_i-q_i|.
\end{align*}

The learner can only extract classical information from the unknown Pauli channel $\cP$ by performing a measurement on the output state for a choice of the input state. Throughout the paper, we only consider unentangled or incoherent measurements. That is, the learner can only measure with an $n$-qubit measurement device, and using auxiliary qubits or measuring multiple copies at once is not allowed. Furthermore, we only require product Pauli eigenstates as input states and measurements in product Pauli bases. This corresponds to a minimal and feasible experimental setup.

It is possible to identify the Pauli strings modulo phases with the space $(\mathbb{Z}_2\times \mathbb{Z}_2)^n$~\cite{watrous2018theory}. Through this identification, it is possible to see that for some Pauli string $Q$
\begin{equation}\label{eq:Fourier}
\alpha_Q\coloneqq \frac{1}{2^n}\tr{ Q\,\mathcal{P}(Q)}=\sum_{P\in\mathbb{P}_n}\,(-1)^{P.Q}\,\,\mu(P)
\end{equation}
is nothing but the Fourier coefficient of the distribution. Thus, it is possible to estimate the marginals of the distribution $\mu$ from such Fourier coefficients, as we will explain in more detail below.
In the appendix, we provide a simple scheme to learn these coefficients in a way that is robust to both state preparation and measurement (SPAM) errors and is inspired by randomized benchmarking (RB) combined with recent classical process shadows (PC) protocols~\cite{kunjummen2023shadow,levy2021classical,franca2022efficient}. It combines the best of these two types of protocols: SPAM-robustness (a feature of RB, but not of PC) and parallelization of estimates (a feature of PC, but not of RB).
For our protocol, we assume that we can prepare product Pauli eigenstates and measure in Pauli bases, and that the SPAM error is independent of the state and measurement and modeled by quantum channels $\Phi_1,\Phi_2$ (see Fig.\ref{fig:robust_parallelization_noisy}). Under these conditions, we have:
\begin{prop}[see Corollary \ref{ref:corSPAMlearning}]
Let $P_1,\ldots,P_m$ be $m$ Pauli strings such that $w(P_i)\leq w$, denote by $p_0$ the probability that no error occurs, i.e.~$p_0=\mu(\mathds{I})$, and  suppose that for a constant $C_{\operatorname{SPAM}}$ we have for all $1\leq i\leq m$:
\begin{align}
2^{-2n}|\tr{\Phi_1^*(P_i)P_i}\tr{\Phi_2^*(P_i)P_i}|\geq C_{\operatorname{SPAM}}.
\end{align}
Then $\mathcal{O}(3^w\epsilon^{-2}C_{\operatorname{SPAM}}^{-2}\log(m\delta^{-1}))$ state preparation and measurements in the setting described above suffice to estimate all $\alpha_{P_i}$ up to an error $\epsilon(1-p_0)$ with probability of success at least $1-\delta$. Furthermore, the maximal number $k$ of uses of the channel needed in each run satisfies $k=\mathcal{O}(\log(1/(1-p_0))$.
\end{prop}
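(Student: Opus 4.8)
The plan is to design, for each target Pauli $P_i$, a bounded estimator whose expectation equals $A_{P_i}\,\alpha_{P_i}^{k}$, where $A_{P_i}:=2^{-2n}\tr{\Phi_1^*(P_i)P_i}\,\tr{\Phi_2^*(P_i)P_i}$ is the (unknown) SPAM prefactor, and then to recover $\alpha_{P_i}$ from the data at a few values of $k$, including $k=0$. A single run is as follows: prepare a product Pauli eigenstate through the preparation channel $\Phi_1$, apply $k$ copies of $\cP$ interleaved with uniformly random single-qubit Pauli gates (whose product is undone at the end), and measure through $\Phi_2$ in a product Pauli basis chosen by a uniformly random layer of single-qubit Cliffords. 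From one such run one forms, in the manner of classical shadows, \emph{simultaneous} unbiased estimators of $2^{-n}\tr{P_i\,\rho_{\mathrm{out}}}$ for all Pauli strings with $w(P_i)\le w$.

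First I would compute the mean. Since $\cP$ is Pauli and hence commutes with conjugation by Pauli strings, the interleaved random-Pauli twirl leaves the body of the sequence equal to $\cP^{k}$ but replaces the SPAM channels by their Pauli twirls, which are Pauli-diagonal; so in the Heisenberg picture only the $P_i$-component of the input survives, the $k$ copies of $\cP$ contribute the factor $\alpha_{P_i}^{k}$ (using $\cP(P_i)=\alpha_{P_i}P_i$ from \eqref{eq:Fourier}), and $\Phi_1,\Phi_2$ contribute the overlaps $2^{-n}\tr{\Phi_1^*(P_i)P_i}$ and $2^{-n}\tr{\Phi_2^*(P_i)P_i}$. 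As the random-basis readout is an unbiased estimate of $2^{-n}\tr{P_i\,\rho_{\mathrm{out}}}$, the overall expectation is exactly $A_{P_i}\,\alpha_{P_i}^{k}$. This is the source of SPAM-robustness: all SPAM dependence is confined to the $k$-independent prefactor $A_{P_i}$, while $\alpha_{P_i}$ sits in the decay rate.

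Next I would control the fluctuations and invert. The classical-shadows estimator of a weight-$w$ Pauli has variance $O(3^{w})$ — each nontrivial single-qubit basis is matched with probability $1/3$ — and it is bounded, so Hoeffding together with a union bound over the $m$ Paulis and the $O(1)$ chosen values of $k$ shows that $N=O(3^{w}\eta^{-2}\log(m\delta^{-1}))$ runs estimate every $A_{P_i}\alpha_{P_i}^{k}$ to additive error $\eta$ with probability $\ge 1-\delta$. To extract $\alpha_{P_i}$, divide the $k\ge1$ estimate by the $k=0$ estimate of $A_{P_i}$; the hypothesis $|A_{P_i}|\ge C_{\operatorname{SPAM}}$ keeps the denominator bounded away from $0$, so this step inflates the error only by $O(C_{\operatorname{SPAM}}^{-1})$, and a $k$-th root (or a least-squares fit of the decay) then returns $\alpha_{P_i}$. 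Choosing the sequence length $k=\Theta(\log(1/(1-p_0)))$, as in the statement, is small enough that $\alpha_{P_i}^{k}$ remains bounded below by a constant — using $1-\alpha_{P_i}=2\sum_{Q:\,P_i.Q=1}\mu(Q)\le 2(1-p_0)$, hence $\alpha_{P_i}\ge 1-2(1-p_0)$ — and large enough that the root extraction is well conditioned; propagating the errors then shows $\eta=\Theta(\epsilon(1-p_0)\,C_{\operatorname{SPAM}})$ (up to logarithmic factors) delivers the target accuracy $\epsilon(1-p_0)$ with the claimed sample complexity.

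The main obstacle is the combination step: realizing the randomized-benchmarking interleaving and the random-Pauli-basis (classical-shadow) readout inside a \emph{single} experiment so that one run produces unbiased, low-variance estimates of all weight-$\le w$ marginals with only the $3^{w}$ locality overhead, while still factoring the SPAM into a benign prefactor — and then checking that the post-processing (division by $A_{P_i}$, then a $k$-th root or decay fit) is well conditioned, which is exactly where the two lower bounds, $|A_{P_i}|\ge C_{\operatorname{SPAM}}$ from the hypothesis and $\alpha_{P_i}\ge 1-2(1-p_0)$ from the definition of $p_0$, are used.
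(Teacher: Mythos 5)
Your proposal follows essentially the same route as the paper's own argument in the appendix: the same class of circuits (a random layer of single-qubit Cliffords to select a Pauli eigenbasis, a Pauli twirl around $\cP^k$, and readout in a random product basis), the same computation showing the estimator's mean is the SPAM prefactor times $\alpha_{P_i}^k$ with all SPAM confined to a $k$-independent constant, the same variance bound of order $3^{w}$, and the same inversion of the exponential decay at depth $k=\mathcal{O}(\log(1/(1-p_0)))$; your explicit ``divide by the $k=0$ estimate and take a $k$-th root'' is a concrete instantiation of the exponential fit the paper delegates to a reference, and your observation that $1-\alpha_{P_i}\le 2(1-p_0)$ makes explicit why the signal survives to that depth. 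The one step that does not go through as written is the concentration argument: Hoeffding's inequality keys on the \emph{range} of the single-run estimator, which is $3^{w}$ (the estimator takes values in $\{-3^{w},0,3^{w}\}$ up to signs), and therefore yields a sample complexity scaling as $9^{w}\epsilon^{-2}$, not $3^{w}\epsilon^{-2}$. To obtain the claimed $3^{w}$ you must exploit the variance bound $\mathbb{E}[\Omega_k^2]\le 3^{w}$ (the estimator is nonzero only with probability $3^{-w}$) through a median-of-means estimator, as the paper does, or through Bernstein's inequality; with that substitution your argument matches the paper's.
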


From Fourier coefficients, one can easily reconstruct marginals of the distribution $\mu$ as follows: denoting by $A\subset [n]$ a certain set of qubits, the marginal $\mu_A$ is defined as 
\begin{equation*}
\mu_A(Q)\coloneqq \sum_{P\in \mathds{P}_{\overline{A}}}\,\mu(Q\otimes P)\,=\sum_{P'\in \mathds{P}_{\overline{A}}}(-1)^{Q\cdot P'}\alpha_{P'},Q\in\mathds{P}_A
\end{equation*}
where $\mathds{P}_{X}$ denotes the set of Pauli strings over a region $X\subset [n]$. For ease of notations, we denote by $P_X$ the Pauli sub-string generated by $P\in\mathds{P}_n$ on a region $X\subset [n]$.

In this paper, we will assume that the underlying distribution $\mu$ has full support and has a conditional independence structure modeled by a (potentially unknown) hypergraph $\mathcal{H}=(V,H)$ with $n$ vertices corresponding to the qubits. In a nutshell, the hypergraph encodes the fact that, if two sets of vertices $A,C$ are shielded away from each other by a third set of vertices $B$, then the probability that a certain error occurs in vertices $A,C$ is independent conditioned on the errors that happened on $B$. To illustrate this concept, imagine three qubits arranged on a line such that qubit $1,2$ and $2,3$ interact but not $1,3$. Then we could assume that, once we fixed that a given error happened on qubit $2$, the probability of errors happening on $1$ and $3$ are independent.

The Hammersley-Clifford theorem~\cite{besag1974spatial} then implies that such a measure $\mu$ corresponds to a Gibbs distribution. This means that
\begin{equation}
\mu(P)\coloneqq \operatorname{exp}\Big(\sum_{\ell=1}^r\sum_{h=(i_1,\ldots,i_{\ell})\in H}\,\theta^{h}(P_{h})-C\Big)\,.\label{Gibbs}
\end{equation}
Here $\theta^{h}:\mathds{P}_{h}\to \mathbb{R}$ is a function that takes as inputs Pauli matrices on
qubits $i_1 \dots i_\ell$. These functions are referred to as
clique potentials. In the equation above, $C$ is a constant that ensures the distribution is normalized and is called the log-partition function. The parameter $r$ denotes the maximal range of a potential, i.e.~how many qubits it acts on. We also assume w.l.o.g. that $\sum_{Q\in\mathds{P}_h}\theta^h(Q)=0$, as we can always shift the potential by a constant without changing the distribution.

A hyperedge $h$ is called maximal if
no other hyperedge of strictly larger size contains $h$. A graph $G = (V, E)$ can be obtained by replacing
every hyperedge with a clique. In other words, each clique in the graph corresponds to a local interaction in $H$. Let $D$ be a bound on the maximum degree of the graph. We denote by $\Gamma(u)$ 
the neighbors of $u$. We further require the following conditions to ensure that we can learn the hypergraph:
\begin{itemize}
\item[(a)] Every edge $(i, j)\in E$ is contained in some hyperedge $h \in H$ where the corresponding tensor is non-zero.
\item[(b)] For every maximal hyperedge $h \in H$, $\theta^h$ takes at least one value lower bounded by some constant $\alpha>0$ in absolute value.
\item[(c)] All the values $\theta^h$ takes are upper bounded by a constant $\beta>0$ in absolute value.
\end{itemize}

\section{Learning the conditional independence structure}

Our first task is to learn the set $H$ of hyperedges. For this, we adapt a greedy approach by Bresler \cite{bresler2015efficiently}. Given a site $u$ and regions $I,S$, first introduce the functional
\begin{align}
\nu_{u,I|S} \coloneqq \mathbb{E}_{R,G}\mathbb{E}_{P_S\sim \mu_S}\Big[|\Delta_{u,I|S}|\Big]\,,
\end{align}
where $\Delta_{u,I|S}\coloneqq \mathbb{P}_\mu(P_u=R,P_I=G|P_S)-\mathbb{P}_{\mu}(P_u=R|P_S)\mathbb{P}_{\mu}(P_I=G|P_S)$, $R$ is a configuration drawn uniformly at random from $\mathds{P}_u$ and $G$ is an $|I|$-tuple drawn uniformly at random from $\mathds{P}_I$. Finally, $P_S$ is drawn according to the unknown distribution $\mu_S$. This quantity will play a crucial role in our algorithm. In words, it quantifies the amount by which the site $u$ and region $I$ are correlated, conditioned on the nature of the configuration of Pauli matrices in the region $S$. In spirit, the larger $\nu_{u,I|S}$, the more likely $u$ and $I$ are to be neighboring, and it will be $0$ if $S$ shields $u$ away from $I$. Since we do not have direct access to this quantity due to its dependence on the unknown distribution $\mu_{\{u\}\cup I\cup S}$, we also define an estimate for it in the standard way:
\begin{align}
\widehat{\nu}_{u,I|S}\coloneqq \mathbb{E}_{R,G}\,\widehat{\mathbb{E}}_{P_S}\,\Big[|\widehat{\Delta}_{u,I|S}|\Big]\,.
\end{align}
Here, we assume access to an approximation $\hat{\mu}_{\{u\}\cup I\cup S}$ of the distribution $\mu_{\{u\}\cup I\cup S}$ which we will later construct from access to the Fourier coefficients introduced in \eqref{eq:Fourier}. The average $\widehat{\mathbb{E}}_{P_S}$ is therefore defined with respect to the marginal $\hat{\mu}_S$, whereas $\widehat{\Delta}_{u,I|S}\coloneqq \mathbb{P}_{\hat{\mu}_{\{u\}\cup I\cup S}}(P_u=R,P_I=G|P_S)-\mathbb{P}_{\hat{\mu}_{\{u\}\cup I\cup S}}(P_u=R|P_S)\mathbb{P}_{\hat{\mu}_{\{u\}\cup I\cup S}}(P_I=G|P_S)$. The following algorithm will succeed in finding the neighborhood of a node $u$ with high probability as long as $\widehat{\nu}_{u,I|S}$ is sufficiently close to $\nu_{u,I|S}$. The algorithm also depends on two parameters $\tau,L>0$, where $\tau$ serves as a thresholding constant, whereas $L$ upper bounds the size of the superset of a neighborhood of $u$. Both these parameters will be fixed later on. 

\begin{algorithm}[H]
  \caption{NeighborhoodLearning}\label{protocollearning2}
  \label{protocol}
  \KwIn{Vertex $u\in [n]$, $L,\tau\ge 0$.}
  \KwOut{Estimated neighborhood $S$ of $u$.}
  \BlankLine
  Initialize $S\coloneqq \emptyset$\;
  \While{$|S|\le L$ and $\exists I\subset [n]\backslash S$, $|I|\le r-1$ s.t.~$\widehat{\nu}_{u,I|S}>\tau$ }{$S\leftarrow  S\cup I$\;}
    \For{$i\in S$}{if $\widehat{\nu}_{u,i|S\backslash i}<\tau$, $S\leftarrow S\backslash \{i\}$;}
  \Return{$S$.}
  \end{algorithm}

Next, given $\ell\in\mathbb{N}$ and $\epsilon>0$, we denote by $A(\ell,\epsilon)$ the event that for all $u,I,S$ with $|I|\le r-1$ and $|S|\le \ell$ simultaneously,
$|\nu_{u,I|S}-\widehat{\nu}_{u,I|S}|\le \epsilon$. In particular, we denote $A\equiv A(L,\tau/2)$. We further choose $\tau\coloneqq \frac{2\alpha^2\delta^{r+d-1}}{r^{2r}4^{r+1}\binom{D}{r-1}\gamma e^{2\gamma}}$ and $L\coloneqq (8/\tau^2)\ln (4)$, with $\gamma\coloneqq \sup_{u\in [n]}\sum_{\ell=1}^r\sum_{i_2<\dots <i_\ell}|\theta^{ui_2\dots i_\ell}|_{\operatorname{max}}$ and $\eta\coloneqq \frac{1}{4}e^{-2\gamma}$. Here $|T|_{\operatorname{max}}$ denotes the maximum entry of a tensor $T$. In \cite[Theorem 5.7]{hamilton2017information}, it is shown that whenever the event $A$ occurs, Algorithm \ref{protocollearning2} returns the correct set of neighbors of $u$ for every node $u$. It remains to estimate the number of samples required for event $A$ to occur with high probability. This is done in the next Lemma.
\begin{lemma}\label{lemmasamplecomplexity}
The event $A$ occurs with probability $1-\delta$ when given access to 
\begin{align}\label{equ:number_copies_learn_structure}
\mathcal{O}\left(\log\left(\frac{1}{1-p_0}\right)\,\frac{4^{3r+5L}}{\big(C_{\operatorname{SPAM}}\,\tau\,\eta^L\big)^2 }\log\frac{n^{r+L}}{\delta}\right)
\end{align}
copies of the channel $\cP$, where $p_0=\mu(\mathds{I})$.
\end{lemma}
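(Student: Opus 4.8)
The plan is to bound the number of channel uses needed so that, with probability $1-\delta$, the estimates $\widehat{\nu}_{u,I|S}$ are all within $\tau/2$ of $\nu_{u,I|S}$ simultaneously over all relevant $u,I,S$. The error $|\nu_{u,I|S}-\widehat{\nu}_{u,I|S}|$ splits into two contributions: (i) the error from replacing the true marginals $\mu_{\{u\}\cup I\cup S}$ by the estimated $\hat\mu_{\{u\}\cup I\cup S}$, which propagates through the conditional probabilities appearing in $\Delta_{u,I|S}$, and (ii) the error from replacing the exact average $\mathbb{E}_{P_S\sim\mu_S}$ by the empirical average $\widehat{\mathbb{E}}_{P_S}$. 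For (ii) I would note that, since the algorithm needs the marginals anyway, it is natural to take $\widehat{\mathbb{E}}_{P_S}$ with respect to $\hat\mu_S$ rather than an independent empirical sample, so that the only stochasticity is in the estimation of the Fourier coefficients; this reduces the analysis to controlling (i).

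The key technical step is to show that if all Fourier coefficients $\alpha_P$ with $w(P)\le r-1+L$ are known up to additive error $\varepsilon_0$, then every marginal $\hat\mu_{\{u\}\cup I\cup S}$ on $|\{u\}\cup I\cup S|\le r-1+L$ qubits is close to $\mu_{\{u\}\cup I\cup S}$ in total variation by roughly $2^{r-1+L}\varepsilon_0$ (summing the inverse-Fourier relation over the $\le 4^{r-1+L}$ configurations, with cancellations), and then to propagate this through the conditional probabilities. The denominators in the conditional probabilities $\mathbb{P}_\mu(P_u=R\mid P_S)$ are lower bounded using the Gibbs structure \eqref{Gibbs}: every configuration on $\le r-1+L$ qubits has $\mu$-probability at least $\eta^{L}$-ish (more precisely, bounded below in terms of $e^{-2\gamma}$ per potential touching the region, which is where $\eta=\tfrac14 e^{-2\gamma}$ and the factor $\eta^L$ in \eqref{equ:number_copies_learn_structure} come from), so the reciprocals and products are Lipschitz with a controlled constant. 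Tracking these constants gives a bound of the form $|\nu_{u,I|S}-\widehat{\nu}_{u,I|S}|\lesssim 4^{O(r+L)}\eta^{-L}\varepsilon_0$, so it suffices to take $\varepsilon_0 \lesssim \tau\,\eta^{L}\,4^{-O(r+L)}$.

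Finally I would invoke the Proposition (Corollary \ref{ref:corSPAMlearning} from the appendix) with $w=r-1+L$, which says that $\mathcal{O}\big(3^{w}\varepsilon_0^{-2}C_{\operatorname{SPAM}}^{-2}\log(m\delta^{-1})\big)$ channel uses suffice to estimate $m$ Fourier coefficients to precision $\varepsilon_0(1-p_0)$, where here we must rescale by $(1-p_0)$ since the Proposition's accuracy is measured relative to $1-p_0$ — this is the origin of the $\log(1/(1-p_0))$ factor together with the bound on the number of channel uses per run. The number of Fourier coefficients we need is $m = \binom{n}{\le r-1+L}\,3^{r-1+L} = n^{O(r+L)}$, giving the $\log(n^{r+L}/\delta)$ term. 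Substituting $\varepsilon_0$ and absorbing $3^w\le 4^{r+L}$ into the $4^{3r+5L}$ factor (which is the crude bound coming from $4^{O(r+L)}$ in the Lipschitz estimate squared times $3^w$) yields exactly \eqref{equ:number_copies_learn_structure}. The main obstacle is the bookkeeping of constants in step two: one must carefully lower-bound every denominator arising in the nested conditional probabilities using \eqref{Gibbs} and propagate a product/quotient perturbation through $\Delta_{u,I|S}$ without losing more than a $4^{O(r+L)}\eta^{-L}$ factor, since a looser bound here would spoil the stated sample complexity.
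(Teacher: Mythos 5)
Your proposal follows essentially the same route as the paper's proof: union-bound over the at most $n^{r+L}$ regions of size $\le r+L$, estimate the corresponding Fourier coefficients with the SPAM-robust protocol of Corollary~\ref{ref:corSPAMlearning}, invert the Fourier transform to obtain the marginals, and propagate the error into $\nu_{u,I|S}$ via the $\eta^{L}$ lower bound on conditional probabilities before setting $\tilde\epsilon\sim\tau\,\eta^{L}4^{-O(r+L)}$. The only difference is that the paper delegates the error-propagation step to \cite[Lemma 5.2]{hamilton2017information} instead of re-deriving it, which is precisely the constant-tracking you identify as the main remaining obstacle.
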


\begin{proof}
Since we assume that $|I|\le r-1$ and $|S|\le L$, the set $\{u\}\cup I\cup S$ has cardinality at most $r+L $. The number of possible subsets of $[n]$ with at most $r+L$ elements is upper bounded by $n^{r+L}$, and each such region corresponds to at most $4^{r+L}$ different configurations. Therefore, to learn all the Fourier coefficients $\alpha_P$ of Pauli strings of weight at most $r+L$ to precision $\tilde{\epsilon}$ with probability $1-\delta$, it suffices to run the protocol of Cor. \ref{ref:corSPAMlearning}, which requires access to $\mathcal{O}(-\log(1-p_0)\,3^{r+L}\tilde{\epsilon}^{-2}C_{\operatorname{SPAM}}^{-2}\log((r+L)n^{r+L}4^{r+L}\delta^{-1}))$ copies of the channel $\cP$. Once the coefficients $\alpha_P$ are all known, we compute their inverse Fourier transform, since for any $P_A$ supported in $A\subset [n]$,
\begin{align}\label{marginalsfromfourier}
\mu_A(P_A)=\sum_{Q\in \mathbf{P}_A}\,(-1)^{P_A.Q}\,\alpha_Q\,.
\end{align}
Now, knowing the coefficients $\alpha_Q$ to precision $\tilde{\epsilon}$ implies that we can compute the coefficients $\mu_A(P_A)$ to precision at least $4^{r+L}\tilde{\epsilon}$. By \cite[Lemma 5.2]{hamilton2017information}, this further implies that we can compute $\nu_{u,I|S}$ to precision $\epsilon$ as long as $4^{r+L}\tilde{\epsilon}\le \epsilon 4^{-L}\frac{\eta^L}{5}$. The result follows from choosing $\tilde{\epsilon}= \frac{\tau}{2} 4^{-2 L -r}\frac{\eta^L}{5}$.
\end{proof}

Combining this with \cite[Theorem 5.7]{hamilton2017information}, we conclude the following.
\begin{thm}\label{learningstructure}
There exists a protocol robust to $\operatorname{SPAM}$ errors which uses a number of queries to $\cP$ as in Eq.~\eqref{equ:number_copies_learn_structure} and returns the conditional independence structure $G$ with probability $1-\delta$. Each query only requires implementing $1-$qubit Clifford gates circuits of the form of that of Figure \ref{fig:robust_parallelization_noisy} and at most $\mathcal{O}(\log(1/(1-p_0))$ sequential queries to $\cP$ are made per run of a circuit. Moreover, the protocol queries $n$ times the protocol of Algorithm \ref{protocollearning2}, whose run time is of order $\widetilde{\mathcal{O}}(Ln^r)$.
\end{thm}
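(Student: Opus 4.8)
The plan is to reduce the statement to two already-available ingredients: the deterministic structure-recovery guarantee of \cite[Theorem~5.7]{hamilton2017information} for Algorithm~\ref{protocollearning2}, and the sample-complexity bound of Lemma~\ref{lemmasamplecomplexity} for the event $A=A(L,\tau/2)$ on which that guarantee applies. The only experimental subroutine will be the SPAM-robust Fourier-coefficient estimator of Cor.~\ref{ref:corSPAMlearning}; everything else is classical post-processing.

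\emph{Acquisition.} First I would call the protocol of Cor.~\ref{ref:corSPAMlearning} a single time to estimate every $\alpha_P$ with $w(P)\le r+L$ to the precision $\widetilde{\epsilon}=\tfrac{\tau}{2}\,4^{-2L-r}\,\eta^L/5$ fixed in Lemma~\ref{lemmasamplecomplexity}, then take inverse Fourier transforms via \eqref{marginalsfromfourier} to obtain estimates $\widehat{\mu}_{\{u\}\cup I\cup S}$ of all marginals on $\le r+L$ qubits, hence all the functionals $\widehat{\nu}_{u,I|S}$ with $|I|\le r-1$, $|S|\le L$. By Lemma~\ref{lemmasamplecomplexity} this costs exactly the number of queries in \eqref{equ:number_copies_learn_structure}, and --- since $A$ is a statement that holds \emph{simultaneously} over all triples $(u,I,S)$ --- this one data set can be reused for every vertex without any further union bound over $[n]$.

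\emph{Recovery and features.} Next I would run Algorithm~\ref{protocollearning2} with the prescribed $\tau,L$ once for each $u\in[n]$, producing sets $S_u$. On the event $A$ (probability $\ge 1-\delta$ by Lemma~\ref{lemmasamplecomplexity}), \cite[Theorem~5.7]{hamilton2017information} gives $S_u=\Gamma(u)$ for all $u$; declaring $(u,v)\in E$ whenever $v\in S_u$ then returns precisely the graph $G$ obtained from $H$ by cliquifying each hyperedge. Here the full-support hypothesis on $\mu$ and conditions (a)--(c) are exactly what places the Gibbs measure \eqref{Gibbs} in the regime covered by \cite{hamilton2017information}. The SPAM-robustness, the single-qubit-Clifford circuit form of Fig.~\ref{fig:robust_parallelization_noisy}, and the bound $k=\mathcal{O}(\log(1/(1-p_0)))$ on sequential channel uses per run are inherited unchanged from Cor.~\ref{ref:corSPAMlearning}, since that is the only step touching the device. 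For the running time, each call to Algorithm~\ref{protocollearning2} runs the while-loop at most $L$ times (each iteration strictly grows $S$, and $|S|\le L$), each iteration scanning the $\mathcal{O}(n^{r-1})$ candidate sets $I$ and evaluating $\widehat{\nu}_{u,I|S}$ from the tabulated marginals at cost $\mathcal{O}(4^{r+L})$, constant in $n$; with the $\mathcal{O}(L)$-step pruning loop this is $\widetilde{\mathcal{O}}(Ln^{r-1})$ per vertex and $\widetilde{\mathcal{O}}(Ln^{r})$ over all $n$ of them.

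\emph{Where the work is.} The combination itself is bookkeeping; the substantive point is that \cite[Theorem~5.7]{hamilton2017information} must be fed \emph{approximate} marginals rather than i.i.d.\ samples, which is why Lemma~\ref{lemmasamplecomplexity} is phrased through the deterministic event $A$ and why the target precision $\widetilde{\epsilon}$ has to be propagated through the $4^{r+L}$-factor amplification in \eqref{marginalsfromfourier} together with the $\mathcal{O}(\eta^{-L})$ Lipschitz sensitivity of $\nu_{u,I|S}$ coming from \cite[Lemma~5.2]{hamilton2017information}. The thresholding constant $\tau$ and the superset bound $L$ fixed before the lemma are precisely those for which \cite[Theorem~5.7]{hamilton2017information} certifies exact recovery at threshold $\tau$, so the remaining task is only to verify that this choice is compatible with the precision that Lemma~\ref{lemmasamplecomplexity} can deliver --- which is exactly its content.
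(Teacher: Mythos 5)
Your proposal is correct and follows essentially the same route as the paper, which proves the theorem simply by combining Lemma~\ref{lemmasamplecomplexity} (event $A$ holds with probability $1-\delta$ at the query cost of Eq.~\eqref{equ:number_copies_learn_structure}) with the deterministic recovery guarantee of \cite[Theorem~5.7]{hamilton2017information}; your write-up just makes explicit the bookkeeping (single acquisition of all low-weight Fourier coefficients reused across all $n$ vertices, inheritance of the SPAM/circuit/depth properties from Cor.~\ref{ref:corSPAMlearning}, and the runtime count) that the paper leaves implicit.
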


\section{Learning the coefficients of the interaction}

Next, we assume that we have already successfully learned the underlying conditional independence structure, for instance by running Algorithm \ref{protocollearning2}, and we aim at learning the values of the coefficients $\theta^{h}(P_{\{h\}})$ for any $h\in\mathcal{H}$. Since the distribution $\mu$ is assumed to be Gibbs, it is well-known that this task can be achieved efficiently with access to $\mathcal{O}(\log(n))$ samples drawn from local marginals. The idea behind this fact is that the effective Hamiltonian of the marginal of a Gibbs measure is also local. Thus, learning
can be performed locally (see e.g.~\cite{note_anurag} for a proof in the more general context of a commuting Hamiltonian). The following result therefore follows from the fact that these marginals can be accessed via the Fourier coefficients, cf.~Equation \ref{marginalsfromfourier}. In what follows, for any hyperedge $h\in\mathcal{H}$, we denote by $R_{h}\subseteq[n]$ the smallest set of vertices that contains $h$ in its strict interior (i.e.~such that $h$ does not overlap with the inner boundary $\partial_-R_h$ of $R_h$, that is vertices which are contained in other hyperedges $h'\ne h$). 
 We will also need the orthogonal basis $\{\chi_P\coloneqq Q\mapsto (-1)^{P.Q}\}_{P\in\mathds{P}_1}$ of characters over $\mathds{P}_1\simeq  \mathbb{Z}_2\times \mathbb{Z}_2$, and more generally define the product basis $\{\chi_{P}\}_{P\in\mathbb{P}_A}$ over a region $A$ by taking products of the single qubit characters.

\begin{algorithm}[H]
  \caption{CoefficientLearning}\label{protocollearning3}
  \KwIn{$h\in H$.}
  \KwOut{Estimated interactions $\widehat{\theta}^h$.}
  \BlankLine
  \Return{$\widehat{\theta}^h\coloneqq \frac{1}{4^{|R_h|}}\sum_{P\in \mathbb{P}_{h}} \chi_P\, \langle \chi_P,\,\log(\widehat{\mu}_{R_h})\rangle $.}
  \end{algorithm}

\begin{thm}\label{learncoeffs}
Assuming that the hypergraph $\mathcal{H}=(V,H)$ is known, there exists a protocol robust to $\operatorname{SPAM}$ errors which uses a number of queries to $\cP$ of order
\begin{equation}
\mathcal{O}\left(\log\left(\frac{1}{1-p_0}\right)\,\frac{e^{\beta rD^{r+1}}4^{6rD^{r+1}}}{(\epsilon\,C_{\operatorname{SPAM}})^{2}}\log\frac{n^{rD^{r+1}}}{\delta}\right)\,.
\end{equation}
 It utilizes at most $\mathcal{O}(\log(1/(1-p_0)))$ sequential queries to $\cP$ and circuits consisting of single-qubit Clifford gates like in Figure \ref{fig:robust_parallelization_noisy}, and returns, for each $h\in H$, an estimator $\hat{\theta}^h$ of $\theta^h$ such that $\|\widehat{\theta}^h-\theta^h\|_\infty\le \epsilon$ with probability $1-\delta$. 
Furthermore, the algorithm queries $\mathcal{O}(n^{rD^{r+1}})$ times the protocol of Algorithm \ref{protocollearning3}, whose run time is of order $\widetilde{\mathcal{O}}(1)$.

\end{thm}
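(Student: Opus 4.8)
The plan is to reduce the task to estimating, for every hyperedge $h\in H$, the marginal $\mu_{R_h}$ of the Pauli error distribution on the \emph{constant-size} region $R_h$, and then to show that Algorithm~\ref{protocollearning3} applied to an estimate $\widehat{\mu}_{R_h}$ of that marginal outputs a good approximation of $\theta^h$. Unwinding the definition in Algorithm~\ref{protocollearning3} using the pairing $\sum_{P\in\mathds{P}_h}(-1)^{P\cdot v}=4^{|h|}\mathbbm{1}[v=0]$, one checks that $\widehat{\theta}^h(Q)=4^{-|R_h\setminus h|}\sum_{S\in\mathds{P}_{R_h}:\,S_h=Q}\log\widehat{\mu}_{R_h}(S)$, i.e.\ $\widehat{\theta}^h(Q)$ is simply the average of $\log\widehat{\mu}_{R_h}$ over all completions of $Q\in\mathds{P}_h$ to $R_h$.

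\textbf{Exact recovery.} First I would argue that if $\widehat{\mu}_{R_h}=\mu_{R_h}$ exactly then this average returns $\theta^h$. By the Hammersley--Clifford theorem the marginal $\mu_{R_h}$ is again a Gibbs measure, and --- this is the locality of the effective Hamiltonian of a marginal, established by the argument referenced in~\cite{note_anurag} --- its effective potential coincides with the original potentials on every hyperedge $h'\subseteq R_h$ disjoint from the inner boundary $\partial_-R_h$, the only extra terms being supported on $\partial_-R_h$. Since $R_h$ is chosen as the smallest region containing $h$ in its strict interior, $h\cap\partial_-R_h=\emptyset$, so projecting $\log\mu_{R_h}$ onto the characters $\{\chi_P\}_{P\in\mathds{P}_h}$ and using the centering convention on the potentials isolates exactly $\theta^h$; hence $\widehat{\theta}^h=\theta^h$.

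\textbf{Size of $R_h$ and error propagation.} Because each hyperedge is a clique in the degree-$D$ graph $G$, each vertex of $h$ lies in only $\mathcal{O}(D^{r-1})$ hyperedges, so $R_h=\bigcup\{h':h'\cap h\neq\emptyset\}$ has $|R_h|=\mathcal{O}(rD^{r+1})$, a bound independent of $n$. Since $\mu_{R_h}$ is Gibbs with effective potentials bounded in terms of $\beta$ and $\mathrm{poly}(r,D)$, every entry obeys $\mu_{R_h}(S)\ge p_{\min}$ for an $n$-independent constant $p_{\min}=e^{-\mathcal{O}(\beta\,\mathrm{poly}(r,D))}4^{-|R_h|}$. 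I then track the errors: by \eqref{marginalsfromfourier}, knowing every Fourier coefficient $\alpha_Q$ with $w(Q)\le |R_h|$ to additive precision $\widetilde{\epsilon}$ determines each $\mu_{R_h}(S)$ to precision $4^{|R_h|}\widetilde{\epsilon}$; provided $4^{|R_h|}\widetilde{\epsilon}\le p_{\min}/2$, stability of $\log$ on $[p_{\min}/2,1]$ (Lipschitz constant $2/p_{\min}$) gives error $\le 2p_{\min}^{-1}4^{|R_h|}\widetilde{\epsilon}$ on each $\log\widehat{\mu}_{R_h}(S)$, and averaging preserves this $\ell_\infty$ bound, so $\|\widehat{\theta}^h-\theta^h\|_\infty\le 2p_{\min}^{-1}4^{|R_h|}\widetilde{\epsilon}$. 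Choosing $\widetilde{\epsilon}=\Theta\big(\epsilon\,p_{\min}\,4^{-|R_h|}\big)$ makes this $\le\epsilon$.

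\textbf{Sample complexity and bookkeeping.} Finally I would realise the Fourier estimates: there are at most $n^{|R_h|}4^{|R_h|}$ Pauli strings of weight $\le |R_h|$, and Corollary~\ref{ref:corSPAMlearning} with $w=|R_h|$ estimates all of them to precision $\widetilde{\epsilon}$ with probability $1-\delta$ using $\mathcal{O}\big(\log(1/(1-p_0))\,3^{|R_h|}\widetilde{\epsilon}^{-2}C_{\operatorname{SPAM}}^{-2}\log(n^{|R_h|}/\delta)\big)$ queries, each a single-qubit-Clifford circuit of the form of Figure~\ref{fig:robust_parallelization_noisy} using at most $\mathcal{O}(\log(1/(1-p_0)))$ sequential calls to $\cP$; since these coefficients determine every $\widehat{\mu}_{R_h}$, one such run covers all $h\in H$ simultaneously, so no further union bound is needed. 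Substituting $\widetilde{\epsilon}$ and the $n$-independent bound $|R_h|=\mathcal{O}(rD^{r+1})$ and collecting the factors $3^{|R_h|}4^{4|R_h|}$ and $p_{\min}^{-2}$ reproduces the stated query complexity. For a fixed $h$, Algorithm~\ref{protocollearning3} only evaluates a Fourier sum over the $4^{|R_h|}=\widetilde{\mathcal{O}}(1)$ strings in $\mathds{P}_{R_h}$, hence runs in $\widetilde{\mathcal{O}}(1)$ time, and it is invoked once per relevant region, $\mathcal{O}(n^{rD^{r+1}})$ times in total. The conceptual crux is the exact-recovery step: one must verify that marginalising a Markov measure onto $R_h$ leaves the interaction on the strictly interior hyperedge $h$ untouched while the unavoidable boundary corrections are simultaneously \emph{local} (so they never reach $h$) and bounded by an $n$-independent constant (so $p_{\min}$ does not decay with $n$); given this locality-of-effective-Hamiltonians input from prior work, the remainder is a careful but routine propagation of additive errors through the inverse Fourier transform and the logarithm.
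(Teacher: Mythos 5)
Your proposal is correct and follows essentially the same route as the paper: reduce to estimating the constant-size marginals $\mu_{R_h}$ via the SPAM-robust Fourier-coefficient protocol of Corollary~\ref{ref:corSPAMlearning}, and control $\|\widehat{\theta}^h-\theta^h\|_\infty$ through the locality of the effective Hamiltonian of the marginal (the step the paper delegates to~\cite{note_anurag}), with your explicit $p_{\min}$/Lipschitz-of-$\log$ argument being the sup-norm version of the paper's stated total-variation tolerance $\epsilon\,e^{-|R_h||\theta^h|}4^{-|R_h|-2r}$. The bookkeeping of $|R_h|\le rD^{r+1}$, the parallel estimation over all regions, and the resulting query complexity match the paper's proof.
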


\begin{proof}
By a simple adaptation of the proof in \cite{note_anurag}, it can be shown that the estimator $\widehat{\theta}^h$ from Algorithm \ref{protocollearning3} satisfies $\|\widehat{\theta}^h-\theta^h\|_\infty\le \epsilon$ as soon as $\big\|\mu_{R_h}-\widehat{\mu}_{R_h}\big\|_{\operatorname{TV}}\le {\epsilon}\,e^{-|R_h|\,|\theta^h|}4^{-|R_h|-2r}$. Since $|R_h|\le rD^{r+1}$ and $|\theta^h|\le \beta$ by assumption, the task then reduces to that of estimating all the marginals $\mu_{R_h}$ up to an error of ${\epsilon}e^{-\beta\,rD^{r+1}}4^{-rD^{r+1}-2r}$ in total variation. This can be done in a SPAM error robust manner as in the proof of Lemma \ref{lemmasamplecomplexity}: since the number of possible subsets of $[n]$ with at most $rD^{r+1}$ elements is upper bounded by $rD^{r+1}n^{rD^{r+1}}$, and since each such region corresponds to at most $4^{rD^{r+1}}$ different configurations, to learn all the corresponding Fourier coefficients to precision $\epsilon'$ with probability $1-\delta$, it suffices to run the protocol of Corollary \ref{ref:corSPAMlearning}, which requires access to $\mathcal{O}(-\log(1-p_0)\,3^{rD^{r+1}}{\epsilon'}^{-2}C_{\operatorname{SPAM}}^{-2}\log(rD^{r+1}n^{rD^{r+1}}4^{rD^{r+1}}\delta^{-1}))$ copies of the channel $\cP$. Moreover, by Equation \ref{marginalsfromfourier}, knowing these Fourier coefficients to precision $\epsilon'$ implies that we can approximate the marginals in sup norm to precision $4^{rD^{r+1}}\epsilon'$, and therefore in total variation to precision $4^{2rD^{r+1}}\epsilon'$. The result follows after setting $4^{2rD^{r+1}}\epsilon'={\epsilon}e^{-\beta\,rD^{r+1}}4^{-rD^{r+1}-2r}$.
\end{proof}
Note that the fact that we can estimate multiple Fourier coefficients in parallel is crucial to ensure the $\log(n)$ scaling of the sample complexity, something previous approaches to learning the coefficients~\cite{flammia2021averaged,harper2020efficient,flammia2021pauli,helsen2022general} did not achieve.

At this stage, we turn our attention to our original problem of learning local Pauli channels in diamond norm. For this, we introduce some few more notations: Given $Q_h\in \mathds{P}_h$, we denote $e_{Q_h}\coloneqq \frac{1}{2^{n}}\,\langle \mu,\,\chi_{Q_h}\rangle$ and $\lambda_{Q_h}\coloneqq \frac{1}{2^n}\,\langle \chi_{Q_h},\,\log\mu\rangle$, similarly, we denote by $\widehat{e}_{Q_h}\coloneqq \frac{1}{2^{n}}\,\langle \widehat{\mu},\,\chi_{Q_h}\rangle$ and $\lambda_{Q_h}\coloneqq \frac{1}{2^n}\,\langle \chi_{Q_h},\,\log\widehat{\mu}\rangle$, where the inner products are with respect to the $\ell_2$ norm on $\mathds{P}_n$. By Plancherel identity, it is clear that $\|\lambda-\widehat{\lambda}\|_{\ell_2}^2=\sum_{|h|\le r}\sum_{P_h\in\mathds{P}_h}|\theta^h(P_h)-\widehat{\theta}^h(P_h)|^2$. Moreover, for classical Gibbs measures, the following strong convexity property holds (see e.g. \cite{montanari2015computational,vuffray2016interaction}):
\begin{align*}
\|e-\widehat{e}\|_{\ell_2}\le \Gamma\,\|\lambda-\widehat{\lambda}\|_{\ell_2}\,,
\end{align*}
for some constant $\Gamma=\mathcal{O}(1)$ that depends exponentially on the degree $D$ but is independent of $n$. Here the euclidean space $\ell_2$ is with respect to the set of parameters $Q_h$ for $h\in H$, i.e.
\begin{align*}
\|x\|_{\ell_2}^2\coloneqq \sum_{h\in H}\sum_{Q_h\in\mathds{P}_h}\,|x_{Q_h}|^2\,.
\end{align*}

\begin{cor}
The learning schemes described in Theorems \ref{learningstructure} and \ref{learncoeffs} with re-scaled required precision $\mathcal{O}(\epsilon n^{-r})$ provides an estimator $\hat{\mu}$ of the unknown measure $\mu$ whose corresponding Pauli channel $\widehat{\cP}$, defined by $\widehat{\cP}(\rho)\coloneqq \sum_{P\in\mathds{P}_n}\widehat{\mu}(P)\,P\rho P$, satisfies
\begin{align*}
\|\cP-\widehat{\cP}\|_\diamond\le \epsilon\,.
\end{align*}
In particular, this strategy provides a classical description of the channel $\widehat{\cP}$ with access to $\widetilde{\mathcal{O}}(n^{r}\epsilon^{-2})$ uses of the channel $\cP$ and a run-time of $\widetilde{\mathcal{O}}(n^{r(2+D^{r+1})}\epsilon^{-2})$.
\end{cor}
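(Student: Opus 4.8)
The plan is to assemble the corollary from the two main theorems by a careful bookkeeping of how an $\epsilon_0$-level control on the interaction coefficients $\theta^h$ propagates through: (i) the Plancherel identity relating $\|\lambda-\widehat\lambda\|_{\ell_2}$ to the per-hyperedge errors $\|\theta^h-\widehat\theta^h\|$, (ii) the strong-convexity inequality $\|e-\widehat e\|_{\ell_2}\le \Gamma\,\|\lambda-\widehat\lambda\|_{\ell_2}$, and finally (iii) the passage from the Fourier/moment coordinates $e_{Q_h}$ back to the total variation distance $\|\mu-\widehat\mu\|_{\mathrm{TV}}$, which by the stated identification equals $\tfrac12\|\cP-\widehat\cP\|_\diamond$. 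Concretely, I would first run Theorem~\ref{learningstructure} to recover the hypergraph $\mathcal H$ correctly with probability $1-\delta/2$; conditioned on that success, I would then run Theorem~\ref{learncoeffs} with target precision $\epsilon_0$ (to be fixed) so that $\|\widehat\theta^h-\theta^h\|_\infty\le \epsilon_0$ for every $h\in H$, again with probability $1-\delta/2$.

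\textbf{Propagating the error.} On the success event, Plancherel gives $\|\lambda-\widehat\lambda\|_{\ell_2}^2=\sum_{|h|\le r}\sum_{P_h\in\mathds P_h}|\theta^h(P_h)-\widehat\theta^h(P_h)|^2$, and since there are at most $\mathcal O(n^r)$ hyperedges each carrying at most $4^r=\mathcal O(1)$ coefficients, this is bounded by $\mathcal O(n^r)\,\epsilon_0^2$, i.e.\ $\|\lambda-\widehat\lambda\|_{\ell_2}=\mathcal O(n^{r/2}\epsilon_0)$. Strong convexity then yields $\|e-\widehat e\|_{\ell_2}=\mathcal O(n^{r/2}\epsilon_0)$ with $\Gamma=\mathcal O(1)$. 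Next I would translate this $\ell_2$ bound on the low-degree Fourier coefficients of $\mu$ into a bound on $\|\mu-\widehat\mu\|_{\mathrm{TV}}$: because both $\mu$ and $\widehat\mu$ are Gibbs states of the \emph{same} bounded-range, bounded-degree hypergraph $\mathcal H$, all their Fourier mass is supported on the $\mathcal O(n^r)$ low-degree characters indexed by hyperedge configurations, so $\|\mu-\widehat\mu\|_{\ell_2(\mathds P_n)}^2 = \sum_{Q}\,|e_Q-\widehat e_Q|^2$ over that bounded set $=\|e-\widehat e\|_{\ell_2}^2$ up to the normalization $2^{-n}$, and then Cauchy--Schwarz over the $2^n$-element outcome space gives $\|\mu-\widehat\mu\|_{\mathrm{TV}}\le \tfrac12\,2^{n/2}\,\|\mu-\widehat\mu\|_{\ell_2(\mathds P_n)} = \mathcal O(n^{r/2}\epsilon_0)$. (Alternatively: $\log\widehat\mu-\log\mu$ is a bounded-range Hamiltonian whose coefficients are the $\widehat\theta^h-\theta^h$, each of size $\le\epsilon_0$; a standard perturbative/cluster-expansion estimate for such Gibbs states gives $\|\mu-\widehat\mu\|_{\mathrm{TV}}=\mathcal O(n^r\epsilon_0)$, which is only polynomially worse and still suffices.) Choosing $\epsilon_0=\mathcal O(\epsilon\,n^{-r})$ — the re-scaled precision asserted in the corollary's statement — then forces $\|\mu-\widehat\mu\|_{\mathrm{TV}}\le \epsilon/2$, hence $\|\cP-\widehat\cP\|_\diamond = 2\|\mu-\widehat\mu\|_{\mathrm{TV}}\le\epsilon$.

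\textbf{Complexity accounting.} Substituting $\epsilon\mapsto \epsilon_0=\mathcal O(\epsilon n^{-r})$ into the sample complexities of Theorems~\ref{learningstructure} and \ref{learncoeffs}: both scale as $\epsilon_0^{-2}$ times polylogarithmic and $n$-independent factors, so the $n^{-r}$ rescaling contributes an $n^{2r}$ factor — wait, the claimed bound is $\widetilde{\mathcal O}(n^r\epsilon^{-2})$, so in fact one only needs the $\ell_2\to\mathrm{TV}$ step to lose $n^{r/2}$ rather than $n^r$, i.e.\ the Cauchy--Schwarz route above (giving $\epsilon_0=\mathcal O(\epsilon n^{-r/2})$) is the one to use, yielding query complexity $\widetilde{\mathcal O}(n^{r}\epsilon^{-2})$ and, since Algorithm~\ref{protocollearning3} is called $\mathcal O(n^{rD^{r+1}})$ times, total run-time $\widetilde{\mathcal O}(n^{r(2+D^{r+1})}\epsilon^{-2})$ after also accounting for the $\widetilde{\mathcal O}(Ln^r)$ cost of Algorithm~\ref{protocollearning2}. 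A union bound over the two failure events keeps the overall success probability at $1-\delta$, and taking $\delta$ a small constant recovers the $2/3$-correctness threshold; the $\mathcal O(\log(1/(1-p_0)))$ bound on sequential channel uses and the single-qubit-Clifford circuit structure are inherited verbatim from the two theorems.

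\textbf{Main obstacle.} The delicate point is step (iii): controlling $\|\mu-\widehat\mu\|_{\mathrm{TV}}$ by the $\ell_2$ distance of low-degree Fourier coefficients \emph{without} picking up an exponential-in-$n$ factor. Naive Cauchy--Schwarz over $2^n$ outcomes is catastrophic; what saves us is that \emph{both} distributions have Fourier support confined to the $\mathcal O(n^r)$ hyperedge-indexed characters, so the relevant $\ell_2$ norms live in an $\mathcal O(n^r)$-dimensional space and the conversion costs only $\mathrm{poly}(n)$. Making this rigorous requires invoking that $\log\mu$ (and $\log\widehat\mu$, once the structure is correctly identified) has the Gibbs form \eqref{Gibbs} with the \emph{same} hyperedge set, so that $e-\widehat e$ and $\mu-\widehat\mu$ are essentially the same vector in two bases related by the (bounded-condition-number, on the relevant subspace) exponential map — which is precisely where the strong-convexity constant $\Gamma=\mathcal O(1)$ and the assumption that $\widehat{\mathcal H}=\mathcal H$ enter. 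The rest is routine substitution of parameters.
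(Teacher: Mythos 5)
Your overall architecture (run Theorem \ref{learningstructure}, then Theorem \ref{learncoeffs}, propagate the coefficient error via Plancherel and the strong-convexity bound $\|e-\widehat e\|_{\ell_2}\le \Gamma\|\lambda-\widehat\lambda\|_{\ell_2}$, then convert to total variation and hence diamond norm) matches the paper's, and your complexity accounting is essentially right — you even correctly notice that the rescaling consistent with the stated $\widetilde{\mathcal O}(n^r\epsilon^{-2})$ query count is $\epsilon_0=\mathcal O(\epsilon n^{-r/2})$ rather than the $\epsilon n^{-r}$ written in the statement. However, your step (iii) — the very step you flag as the main obstacle — rests on a false premise. You assert that because $\mu$ and $\widehat\mu$ are Gibbs measures of the same bounded-range hypergraph, ``all their Fourier mass is supported on the $\mathcal O(n^r)$ low-degree characters.'' That is true of $\log\mu$ and $\log\widehat\mu$ (this is exactly the Gibbs form \eqref{Gibbs}), but not of $\mu$ itself: $\mu=\exp(\text{low-degree polynomial})$ generically has nonvanishing Fourier coefficients $e_Q$ at all degrees (already a product measure, whose log-density has degree $1$, has full Fourier support). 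Consequently $\|\mu-\widehat\mu\|_{\ell_2(\mathds P_n)}^2$ is \emph{not} equal to $\|e-\widehat e\|_{\ell_2}^2$ restricted to hyperedge-indexed characters, and your Cauchy--Schwarz conversion to TV does not go through without the exponential factor you were trying to avoid.

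The paper's proof circumvents this with a different pivot: Pinsker's inequality gives $(2\|\mu-\widehat\mu\|_{\mathrm{TV}})^2\le D(\mu\|\widehat\mu)+D(\widehat\mu\|\mu)$, and the symmetrized relative entropy is the inner product $\sum_P\big(\mu(P)-\widehat\mu(P)\big)\big(\log\mu(P)-\log\widehat\mu(P)\big)$. This collapses to the low-degree sum $\sum_{|h|\le r}\sum_{Q_h}(e_{Q_h}-\widehat e_{Q_h})(\lambda_{Q_h}-\widehat\lambda_{Q_h})$ because only \emph{one} factor of the inner product, namely $\log\mu-\log\widehat\mu$, needs to be band-limited; the high-degree Fourier content of $\mu-\widehat\mu$ is annihilated by orthogonality. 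Then Cauchy--Schwarz in the $\mathcal O(n^r)$-dimensional parameter space, strong convexity, and Plancherel give $\|\cP-\widehat\cP\|_\diamond^2\le \Gamma\,\mathcal O(n^r)\max_h\|\theta^h-\widehat\theta^h\|_\infty^2$, whence the $n^{r/2}$ loss and the stated query count. Your fallback via a perturbative bound $\|\mu-\widehat\mu\|_{\mathrm{TV}}=\mathcal O(n^r\epsilon_0)$ (from $\|\log\mu-\log\widehat\mu\|_\infty=\mathcal O(n^r\epsilon_0)$) is sound and would salvage the diamond-norm guarantee, but it forces $\epsilon_0=\mathcal O(\epsilon n^{-r})$ and hence $\widetilde{\mathcal O}(n^{2r}\epsilon^{-2})$ queries, quadratically worse in $n^r$ than the claimed bound; so as written your argument either fails (the Fourier-support route) or falls short of the stated complexity (the perturbative route).
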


\begin{proof}
We first recall that \cite{magesan2012characterizing}:
\begin{align*}
\|\cP-\widehat{P}\|^2_\diamond&= (2\,\|\mu-\widehat{\mu}\|_{\operatorname{TV}})^2\\
&\le D(\mu\|\widehat{\mu})+D(\widehat{\mu}\|\mu)\,,
\end{align*}
where in the second line, we use Pinsker's inequality which relates the total variation distance between two probability measures to their relative entropies:
\begin{align*}
D(\mu_1\|\mu_2)\coloneqq \sum_{P\in\mathds{P}_n}\,\mu_1(P)\,\big(\log(\mu_1(P))-\log(\mu_2(P))\big)\,.
\end{align*}
By using the definitions for $\widehat{\mu}$ and $\mu$, we further can show that
\begin{align*}
D(\mu\|\widehat{\mu})+D(\widehat{\mu}\|\mu)&=\sum_{|h|\le r}\sum_{Q_h\in\mathds{P}_h} ( e_{Q_h}-\widehat{e}_{Q_h})(\lambda_{Q_h}-\widehat{\lambda}_{Q_h})\\
&\le \|e-\widehat{e}\|_{\ell_2}\,\|\lambda-\widehat{\lambda}\|_{\ell_2}\\
&\le \Gamma\,\|\lambda-\widehat{\lambda}\|_{\ell^2}^2\\
&=\Gamma\,\sum_{|h|\le r}\sum_{P_h\in\mathds{P}_h}\,|\theta^h(P_h)-\widehat{\theta}^h(P_h)|^2\\
&= \mathcal{O}(n^r)\,\max_{h}\|\theta^h-\widehat{\theta}^h\|_\infty^2\,.
\end{align*}
The result follows after re-scaling $\epsilon$ to $\epsilon n^{-r}$ in Theorems \ref{learningstructure} and \ref{learncoeffs} in order to get the required precision.
\end{proof}

{\it Acknowledgements.} CR acknowledges financial support from the ANR project QTraj (ANR-20-CE40-0024-01) of the French National Research Agency (ANR), as well as from the Humboldt Foundation. This work is part of HQI initiative (\href{www.hqi.fr}{www.hqi.fr}) and is supported by France 2030 under the French National Research Agency award number “ANR-22-PNCQ-0002”.
We thank Matthias Caro, Steve Flammia and Aadil Oufkir for fruitful discussions and comments on the draft.

\bibliographystyle{apsrev4-2}
\bibliography{references}

%apsrev4-2.bst 2019-01-14 (MD) hand-edited version of apsrev4-1.bst
%Control: key (0)
%Control: author (72) initials jnrlst
%Control: editor formatted (1) identically to author
%Control: production of article title (-1) disabled
%Control: page (0) single
%Control: year (1) truncated
%Control: production of eprint (0) enabled
\begin{thebibliography}{26}%
\makeatletter
\providecommand \@ifxundefined [1]{%
 \@ifx{#1\undefined}
}%
\providecommand \@ifnum [1]{%
 \ifnum #1\expandafter \@firstoftwo
 \else \expandafter \@secondoftwo
 \fi
}%
\providecommand \@ifx [1]{%
 \ifx #1\expandafter \@firstoftwo
 \else \expandafter \@secondoftwo
 \fi
}%
\providecommand \natexlab [1]{#1}%
\providecommand \enquote  [1]{``#1''}%
\providecommand \bibnamefont  [1]{#1}%
\providecommand \bibfnamefont [1]{#1}%
\providecommand \citenamefont [1]{#1}%
\providecommand \href@noop [0]{\@secondoftwo}%
\providecommand \href [0]{\begingroup \@sanitize@url \@href}%
\providecommand \@href[1]{\@@startlink{#1}\@@href}%
\providecommand \@@href[1]{\endgroup#1\@@endlink}%
\providecommand \@sanitize@url [0]{\catcode `\\12\catcode `\$12\catcode
  `\&12\catcode `\#12\catcode `\^12\catcode `\_12\catcode `\%12\relax}%
\providecommand \@@startlink[1]{}%
\providecommand \@@endlink[0]{}%
\providecommand \url  [0]{\begingroup\@sanitize@url \@url }%
\providecommand \@url [1]{\endgroup\@href {#1}{\urlprefix }}%
\providecommand \urlprefix  [0]{URL }%
\providecommand \Eprint [0]{\href }%
\providecommand \doibase [0]{https://doi.org/}%
\providecommand \selectlanguage [0]{\@gobble}%
\providecommand \bibinfo  [0]{\@secondoftwo}%
\providecommand \bibfield  [0]{\@secondoftwo}%
\providecommand \translation [1]{[#1]}%
\providecommand \BibitemOpen [0]{}%
\providecommand \bibitemStop [0]{}%
\providecommand \bibitemNoStop [0]{.\EOS\space}%
\providecommand \EOS [0]{\spacefactor3000\relax}%
\providecommand \BibitemShut  [1]{\csname bibitem#1\endcsname}%
\let\auto@bib@innerbib\@empty
%</preamble>
\bibitem [{\citenamefont {Fawzi}\ \emph {et~al.}(2023)\citenamefont {Fawzi},
  \citenamefont {Oufkir},\ and\ \citenamefont {França}}]{fawzi2023lower}%
  \BibitemOpen
  \bibfield  {author} {\bibinfo {author} {\bibfnamefont {O.}~\bibnamefont
  {Fawzi}}, \bibinfo {author} {\bibfnamefont {A.}~\bibnamefont {Oufkir}},\ and\
  \bibinfo {author} {\bibfnamefont {D.~S.}\ \bibnamefont {França}},\
  }\href@noop {} {\bibinfo {title} {Lower bounds on learning pauli channels}}
  (\bibinfo {year} {2023}),\ \Eprint {https://arxiv.org/abs/2301.09192}
  {arXiv:2301.09192 [quant-ph]} \BibitemShut {NoStop}%
\bibitem [{\citenamefont {Flammia}\ and\ \citenamefont
  {Wallman}(2020)}]{flammia2020efficient}%
  \BibitemOpen
  \bibfield  {author} {\bibinfo {author} {\bibfnamefont {S.~T.}\ \bibnamefont
  {Flammia}}\ and\ \bibinfo {author} {\bibfnamefont {J.~J.}\ \bibnamefont
  {Wallman}},\ }\href@noop {} {\bibfield  {journal} {\bibinfo  {journal} {ACM
  Transactions on Quantum Computing}\ }\textbf {\bibinfo {volume} {1}},\
  \bibinfo {pages} {1} (\bibinfo {year} {2020})}\BibitemShut {NoStop}%
\bibitem [{\citenamefont {Chen}\ \emph {et~al.}(2022)\citenamefont {Chen},
  \citenamefont {Zhou}, \citenamefont {Seif},\ and\ \citenamefont
  {Jiang}}]{chen2022quantum}%
  \BibitemOpen
  \bibfield  {author} {\bibinfo {author} {\bibfnamefont {S.}~\bibnamefont
  {Chen}}, \bibinfo {author} {\bibfnamefont {S.}~\bibnamefont {Zhou}}, \bibinfo
  {author} {\bibfnamefont {A.}~\bibnamefont {Seif}},\ and\ \bibinfo {author}
  {\bibfnamefont {L.}~\bibnamefont {Jiang}},\ }\href@noop {} {\bibfield
  {journal} {\bibinfo  {journal} {Physical Review A}\ }\textbf {\bibinfo
  {volume} {105}},\ \bibinfo {pages} {032435} (\bibinfo {year}
  {2022})}\BibitemShut {NoStop}%
\bibitem [{\citenamefont {Flammia}\ and\ \citenamefont
  {O'Donnell}(2021)}]{flammia2021pauli}%
  \BibitemOpen
  \bibfield  {author} {\bibinfo {author} {\bibfnamefont {S.~T.}\ \bibnamefont
  {Flammia}}\ and\ \bibinfo {author} {\bibfnamefont {R.}~\bibnamefont
  {O'Donnell}},\ }\href@noop {} {\bibfield  {journal} {\bibinfo  {journal}
  {Quantum}\ }\textbf {\bibinfo {volume} {5}},\ \bibinfo {pages} {549}
  (\bibinfo {year} {2021})}\BibitemShut {NoStop}%
\bibitem [{\citenamefont {Wagner}\ \emph {et~al.}(2023)\citenamefont {Wagner},
  \citenamefont {Kampermann}, \citenamefont {Bru\ss{}},\ and\ \citenamefont
  {Kliesch}}]{PhysRevLett.130.200601}%
  \BibitemOpen
  \bibfield  {author} {\bibinfo {author} {\bibfnamefont {T.}~\bibnamefont
  {Wagner}}, \bibinfo {author} {\bibfnamefont {H.}~\bibnamefont {Kampermann}},
  \bibinfo {author} {\bibfnamefont {D.}~\bibnamefont {Bru\ss{}}},\ and\
  \bibinfo {author} {\bibfnamefont {M.}~\bibnamefont {Kliesch}},\ }\href
  {https://doi.org/10.1103/PhysRevLett.130.200601} {\bibfield  {journal}
  {\bibinfo  {journal} {Phys. Rev. Lett.}\ }\textbf {\bibinfo {volume} {130}},\
  \bibinfo {pages} {200601} (\bibinfo {year} {2023})}\BibitemShut {NoStop}%
\bibitem [{\citenamefont {Wallman}\ and\ \citenamefont
  {Emerson}(2016)}]{PhysRevA.94.052325}%
  \BibitemOpen
  \bibfield  {author} {\bibinfo {author} {\bibfnamefont {J.~J.}\ \bibnamefont
  {Wallman}}\ and\ \bibinfo {author} {\bibfnamefont {J.}~\bibnamefont
  {Emerson}},\ }\href {https://doi.org/10.1103/PhysRevA.94.052325} {\bibfield
  {journal} {\bibinfo  {journal} {Phys. Rev. A}\ }\textbf {\bibinfo {volume}
  {94}},\ \bibinfo {pages} {052325} (\bibinfo {year} {2016})}\BibitemShut
  {NoStop}%
\bibitem [{\citenamefont {Harper}\ \emph {et~al.}(2020)\citenamefont {Harper},
  \citenamefont {Flammia},\ and\ \citenamefont
  {Wallman}}]{harper2020efficient}%
  \BibitemOpen
  \bibfield  {author} {\bibinfo {author} {\bibfnamefont {R.}~\bibnamefont
  {Harper}}, \bibinfo {author} {\bibfnamefont {S.~T.}\ \bibnamefont
  {Flammia}},\ and\ \bibinfo {author} {\bibfnamefont {J.~J.}\ \bibnamefont
  {Wallman}},\ }\href@noop {} {\bibfield  {journal} {\bibinfo  {journal}
  {Nature Physics}\ }\textbf {\bibinfo {volume} {16}},\ \bibinfo {pages} {1184}
  (\bibinfo {year} {2020})}\BibitemShut {NoStop}%
\bibitem [{\citenamefont {Harper}\ and\ \citenamefont
  {Flammia}(2023)}]{harper2023learning}%
  \BibitemOpen
  \bibfield  {author} {\bibinfo {author} {\bibfnamefont {R.}~\bibnamefont
  {Harper}}\ and\ \bibinfo {author} {\bibfnamefont {S.~T.}\ \bibnamefont
  {Flammia}},\ }\href@noop {} {\bibinfo {title} {Learning correlated noise in a
  39-qubit quantum processor}} (\bibinfo {year} {2023}),\ \Eprint
  {https://arxiv.org/abs/2303.00780} {arXiv:2303.00780 [quant-ph]} \BibitemShut
  {NoStop}%
\bibitem [{\citenamefont {Bresler}(2015)}]{bresler2015efficiently}%
  \BibitemOpen
  \bibfield  {author} {\bibinfo {author} {\bibfnamefont {G.}~\bibnamefont
  {Bresler}},\ }in\ \href@noop {} {\emph {\bibinfo {booktitle} {Proceedings of
  the forty-seventh annual ACM symposium on Theory of computing}}}\ (\bibinfo
  {year} {2015})\ pp.\ \bibinfo {pages} {771--782}\BibitemShut {NoStop}%
\bibitem [{\citenamefont {Fran{\c{c}}a}\ and\ \citenamefont
  {Hashagen}(2018)}]{francca2018approximate}%
  \BibitemOpen
  \bibfield  {author} {\bibinfo {author} {\bibfnamefont {D.~S.}\ \bibnamefont
  {Fran{\c{c}}a}}\ and\ \bibinfo {author} {\bibfnamefont {A.}~\bibnamefont
  {Hashagen}},\ }\href@noop {} {\bibfield  {journal} {\bibinfo  {journal}
  {Journal of Physics A: Mathematical and Theoretical}\ }\textbf {\bibinfo
  {volume} {51}},\ \bibinfo {pages} {395302} (\bibinfo {year}
  {2018})}\BibitemShut {NoStop}%
\bibitem [{\citenamefont {Helsen}\ \emph {et~al.}(2022)\citenamefont {Helsen},
  \citenamefont {Roth}, \citenamefont {Onorati}, \citenamefont {Werner},\ and\
  \citenamefont {Eisert}}]{helsen2022general}%
  \BibitemOpen
  \bibfield  {author} {\bibinfo {author} {\bibfnamefont {J.}~\bibnamefont
  {Helsen}}, \bibinfo {author} {\bibfnamefont {I.}~\bibnamefont {Roth}},
  \bibinfo {author} {\bibfnamefont {E.}~\bibnamefont {Onorati}}, \bibinfo
  {author} {\bibfnamefont {A.~H.}\ \bibnamefont {Werner}},\ and\ \bibinfo
  {author} {\bibfnamefont {J.}~\bibnamefont {Eisert}},\ }\href@noop {}
  {\bibfield  {journal} {\bibinfo  {journal} {PRX Quantum}\ }\textbf {\bibinfo
  {volume} {3}},\ \bibinfo {pages} {020357} (\bibinfo {year}
  {2022})}\BibitemShut {NoStop}%
\bibitem [{\citenamefont {Helsen}\ \emph {et~al.}(2019)\citenamefont {Helsen},
  \citenamefont {Xue}, \citenamefont {Vandersypen},\ and\ \citenamefont
  {Wehner}}]{helsen2019new}%
  \BibitemOpen
  \bibfield  {author} {\bibinfo {author} {\bibfnamefont {J.}~\bibnamefont
  {Helsen}}, \bibinfo {author} {\bibfnamefont {X.}~\bibnamefont {Xue}},
  \bibinfo {author} {\bibfnamefont {L.~M.}\ \bibnamefont {Vandersypen}},\ and\
  \bibinfo {author} {\bibfnamefont {S.}~\bibnamefont {Wehner}},\ }\href@noop {}
  {\bibfield  {journal} {\bibinfo  {journal} {npj Quantum Information}\
  }\textbf {\bibinfo {volume} {5}},\ \bibinfo {pages} {71} (\bibinfo {year}
  {2019})}\BibitemShut {NoStop}%
\bibitem [{\citenamefont {Magesan}\ \emph {et~al.}(2012)\citenamefont
  {Magesan}, \citenamefont {Gambetta},\ and\ \citenamefont
  {Emerson}}]{magesan2012characterizing}%
  \BibitemOpen
  \bibfield  {author} {\bibinfo {author} {\bibfnamefont {E.}~\bibnamefont
  {Magesan}}, \bibinfo {author} {\bibfnamefont {J.~M.}\ \bibnamefont
  {Gambetta}},\ and\ \bibinfo {author} {\bibfnamefont {J.}~\bibnamefont
  {Emerson}},\ }\href@noop {} {\bibfield  {journal} {\bibinfo  {journal}
  {Physical Review A}\ }\textbf {\bibinfo {volume} {85}},\ \bibinfo {pages}
  {042311} (\bibinfo {year} {2012})}\BibitemShut {NoStop}%
\bibitem [{\citenamefont {Heinrich}\ \emph {et~al.}(2023)\citenamefont
  {Heinrich}, \citenamefont {Kliesch},\ and\ \citenamefont
  {Roth}}]{heinrich2023randomized}%
  \BibitemOpen
  \bibfield  {author} {\bibinfo {author} {\bibfnamefont {M.}~\bibnamefont
  {Heinrich}}, \bibinfo {author} {\bibfnamefont {M.}~\bibnamefont {Kliesch}},\
  and\ \bibinfo {author} {\bibfnamefont {I.}~\bibnamefont {Roth}},\ }\href@noop
  {} {\bibinfo {title} {Randomized benchmarking with random quantum circuits}}
  (\bibinfo {year} {2023}),\ \Eprint {https://arxiv.org/abs/2212.06181}
  {arXiv:2212.06181 [quant-ph]} \BibitemShut {NoStop}%
\bibitem [{\citenamefont {França}\ \emph {et~al.}(2022)\citenamefont
  {França}, \citenamefont {Markovich}, \citenamefont {Dobrovitski},
  \citenamefont {Werner},\ and\ \citenamefont
  {Borregaard}}]{franca2022efficient}%
  \BibitemOpen
  \bibfield  {author} {\bibinfo {author} {\bibfnamefont {D.~S.}\ \bibnamefont
  {França}}, \bibinfo {author} {\bibfnamefont {L.~A.}\ \bibnamefont
  {Markovich}}, \bibinfo {author} {\bibfnamefont {V.~V.}\ \bibnamefont
  {Dobrovitski}}, \bibinfo {author} {\bibfnamefont {A.~H.}\ \bibnamefont
  {Werner}},\ and\ \bibinfo {author} {\bibfnamefont {J.}~\bibnamefont
  {Borregaard}},\ }\href@noop {} {\bibinfo {title} {Efficient and robust
  estimation of many-qubit hamiltonians}} (\bibinfo {year} {2022}),\ \Eprint
  {https://arxiv.org/abs/2205.09567} {arXiv:2205.09567 [quant-ph]} \BibitemShut
  {NoStop}%
\bibitem [{\citenamefont {Kunjummen}\ \emph {et~al.}(2023)\citenamefont
  {Kunjummen}, \citenamefont {Tran}, \citenamefont {Carney},\ and\
  \citenamefont {Taylor}}]{kunjummen2023shadow}%
  \BibitemOpen
  \bibfield  {author} {\bibinfo {author} {\bibfnamefont {J.}~\bibnamefont
  {Kunjummen}}, \bibinfo {author} {\bibfnamefont {M.~C.}\ \bibnamefont {Tran}},
  \bibinfo {author} {\bibfnamefont {D.}~\bibnamefont {Carney}},\ and\ \bibinfo
  {author} {\bibfnamefont {J.~M.}\ \bibnamefont {Taylor}},\ }\href@noop {}
  {\bibfield  {journal} {\bibinfo  {journal} {Physical Review A}\ }\textbf
  {\bibinfo {volume} {107}},\ \bibinfo {pages} {042403} (\bibinfo {year}
  {2023})}\BibitemShut {NoStop}%
\bibitem [{\citenamefont {Levy}\ \emph {et~al.}(2021)\citenamefont {Levy},
  \citenamefont {Luo},\ and\ \citenamefont {Clark}}]{levy2021classical}%
  \BibitemOpen
  \bibfield  {author} {\bibinfo {author} {\bibfnamefont {R.}~\bibnamefont
  {Levy}}, \bibinfo {author} {\bibfnamefont {D.}~\bibnamefont {Luo}},\ and\
  \bibinfo {author} {\bibfnamefont {B.~K.}\ \bibnamefont {Clark}},\ }\href@noop
  {} {\bibinfo {title} {Classical shadows for quantum process tomography on
  near-term quantum computers}} (\bibinfo {year} {2021}),\ \Eprint
  {https://arxiv.org/abs/2110.02965} {arXiv:2110.02965 [quant-ph]} \BibitemShut
  {NoStop}%
\bibitem [{\citenamefont {Watrous}(2018)}]{watrous2018theory}%
  \BibitemOpen
  \bibfield  {author} {\bibinfo {author} {\bibfnamefont {J.}~\bibnamefont
  {Watrous}},\ }\href@noop {} {\emph {\bibinfo {title} {The theory of quantum
  information}}}\ (\bibinfo  {publisher} {Cambridge university press},\
  \bibinfo {year} {2018})\BibitemShut {NoStop}%
\bibitem [{\citenamefont {Besag}(1974)}]{besag1974spatial}%
  \BibitemOpen
  \bibfield  {author} {\bibinfo {author} {\bibfnamefont {J.}~\bibnamefont
  {Besag}},\ }\href@noop {} {\bibfield  {journal} {\bibinfo  {journal} {Journal
  of the Royal Statistical Society: Series B (Methodological)}\ }\textbf
  {\bibinfo {volume} {36}},\ \bibinfo {pages} {192} (\bibinfo {year}
  {1974})}\BibitemShut {NoStop}%
\bibitem [{\citenamefont {Hamilton}\ \emph {et~al.}(2017)\citenamefont
  {Hamilton}, \citenamefont {Koehler},\ and\ \citenamefont
  {Moitra}}]{hamilton2017information}%
  \BibitemOpen
  \bibfield  {author} {\bibinfo {author} {\bibfnamefont {L.}~\bibnamefont
  {Hamilton}}, \bibinfo {author} {\bibfnamefont {F.}~\bibnamefont {Koehler}},\
  and\ \bibinfo {author} {\bibfnamefont {A.}~\bibnamefont {Moitra}},\
  }\href@noop {} {\bibfield  {journal} {\bibinfo  {journal} {Advances in Neural
  Information Processing Systems}\ }\textbf {\bibinfo {volume} {30}} (\bibinfo
  {year} {2017})}\BibitemShut {NoStop}%
\bibitem [{\citenamefont {Anshu}\ \emph {et~al.}()\citenamefont {Anshu},
  \citenamefont {Arunachalam}, \citenamefont {Kuwahara},\ and\ \citenamefont
  {Soleimanifar}}]{note_anurag}%
  \BibitemOpen
  \bibfield  {author} {\bibinfo {author} {\bibfnamefont {A.}~\bibnamefont
  {Anshu}}, \bibinfo {author} {\bibfnamefont {S.}~\bibnamefont {Arunachalam}},
  \bibinfo {author} {\bibfnamefont {T.}~\bibnamefont {Kuwahara}},\ and\
  \bibinfo {author} {\bibfnamefont {M.}~\bibnamefont {Soleimanifar}},\ }\href
  {https://people.eecs.berkeley.edu/~anuraganshu/Learning_commuting_hamiltonian.pdf}
  {\bibinfo {title} {Efficient learning of commuting {H}amiltonians on
  lattices}},\ \bibinfo {note} {unpublished notes avaible at Anurag Anshu's
  website,
  \href{https://people.eecs.berkeley.edu/~anuraganshu/Learning_commuting_hamiltonian.pdf}{(link
  to note)}}\BibitemShut {NoStop}%
\bibitem [{\citenamefont {Flammia}(2021)}]{flammia2021averaged}%
  \BibitemOpen
  \bibfield  {author} {\bibinfo {author} {\bibfnamefont {S.~T.}\ \bibnamefont
  {Flammia}},\ }\href@noop {} {\bibinfo {title} {Averaged circuit eigenvalue
  sampling}} (\bibinfo {year} {2021}),\ \Eprint
  {https://arxiv.org/abs/2108.05803} {arXiv:2108.05803 [quant-ph]} \BibitemShut
  {NoStop}%
\bibitem [{\citenamefont {Montanari}(2015)}]{montanari2015computational}%
  \BibitemOpen
  \bibfield  {author} {\bibinfo {author} {\bibfnamefont {A.}~\bibnamefont
  {Montanari}},\ }\bibfield  {journal} {\bibinfo  {journal} {Electronic Journal
  of Statistics}\ }\textbf {\bibinfo {volume} {9}},\ \href
  {https://doi.org/10.1214/15-ejs1059} {10.1214/15-ejs1059} (\bibinfo {year}
  {2015})\BibitemShut {NoStop}%
\bibitem [{\citenamefont {Vuffray}\ \emph {et~al.}(2016)\citenamefont
  {Vuffray}, \citenamefont {Misra}, \citenamefont {Lokhov},\ and\ \citenamefont
  {Chertkov}}]{vuffray2016interaction}%
  \BibitemOpen
  \bibfield  {author} {\bibinfo {author} {\bibfnamefont {M.}~\bibnamefont
  {Vuffray}}, \bibinfo {author} {\bibfnamefont {S.}~\bibnamefont {Misra}},
  \bibinfo {author} {\bibfnamefont {A.}~\bibnamefont {Lokhov}},\ and\ \bibinfo
  {author} {\bibfnamefont {M.}~\bibnamefont {Chertkov}},\ }\href@noop {}
  {\bibfield  {journal} {\bibinfo  {journal} {Advances in neural information
  processing systems}\ }\textbf {\bibinfo {volume} {29}} (\bibinfo {year}
  {2016})}\BibitemShut {NoStop}%
\bibitem [{\citenamefont {Devroye}\ \emph {et~al.}(2016)\citenamefont
  {Devroye}, \citenamefont {Lerasle}, \citenamefont {Lugosi},\ and\
  \citenamefont {Oliveira}}]{Devroye2016}%
  \BibitemOpen
  \bibfield  {author} {\bibinfo {author} {\bibfnamefont {L.}~\bibnamefont
  {Devroye}}, \bibinfo {author} {\bibfnamefont {M.}~\bibnamefont {Lerasle}},
  \bibinfo {author} {\bibfnamefont {G.}~\bibnamefont {Lugosi}},\ and\ \bibinfo
  {author} {\bibfnamefont {R.~I.}\ \bibnamefont {Oliveira}},\ }\bibfield
  {journal} {\bibinfo  {journal} {The Annals of Statistics}\ }\textbf {\bibinfo
  {volume} {44}},\ \href {https://doi.org/10.1214/16-aos1440}
  {10.1214/16-aos1440} (\bibinfo {year} {2016})\BibitemShut {NoStop}%
\bibitem [{\citenamefont {Harper}\ \emph {et~al.}(2019)\citenamefont {Harper},
  \citenamefont {Hincks}, \citenamefont {Ferrie}, \citenamefont {Flammia},\
  and\ \citenamefont {Wallman}}]{PhysRevA.99.052350}%
  \BibitemOpen
  \bibfield  {author} {\bibinfo {author} {\bibfnamefont {R.}~\bibnamefont
  {Harper}}, \bibinfo {author} {\bibfnamefont {I.}~\bibnamefont {Hincks}},
  \bibinfo {author} {\bibfnamefont {C.}~\bibnamefont {Ferrie}}, \bibinfo
  {author} {\bibfnamefont {S.~T.}\ \bibnamefont {Flammia}},\ and\ \bibinfo
  {author} {\bibfnamefont {J.~J.}\ \bibnamefont {Wallman}},\ }\href
  {https://doi.org/10.1103/PhysRevA.99.052350} {\bibfield  {journal} {\bibinfo
  {journal} {Phys. Rev. A}\ }\textbf {\bibinfo {volume} {99}},\ \bibinfo
  {pages} {052350} (\bibinfo {year} {2019})}\BibitemShut {NoStop}%
\end{thebibliography}%

\appendix
\section{SPAM robust parallel estimation of Pauli eigenvalues}
Our protocol requires us to estimate the eigenvalues of the Pauli channel for Paulis with small support, as these correspond to low-weight Fourier coefficients.
The estimation of eigenvalues of a Pauli channel has been considered in several works in the literature~\cite{flammia2020efficient,flammia2021pauli,flammia2021averaged}. In particular, using tools from randomized benchmarking techniques, it is possible to estimate these coefficients in a way that is robust against both state preparation and measurement errors (SPAM). 
This is a highly desirable feature for process tomography protocols since in current devices these errors are not negligible.
On the other hand, recent schemes inspired by the classical shadows framework for learning states~\cite{kunjummen2023shadow,levy2021classical,franca2022efficient} showed how to infer various parameters of a quantum channel from a single class of randomized experiments. 
Indeed, such schemes are able to estimate $m$ Pauli diagonal elements of the form
\begin{align}
2^{-n}\tr{P\cP(P)}
\end{align}
for Paulis of weight at most $w$ up to an error $\epsilon$ with probability of success at least $1-\delta$ from $\mathcal{O}(9^w\epsilon^{-2}\log(m\delta^{-1}))$ randomized experiments. Thus, we see that this protocol is highly parallelizable. However, it is not as robust to SPAM errors as randomized benchmarking. We will now show how to obtain the best of the two worlds.

We will denote by $\Phi^*$ the Hilbert Schmidt dual (i.e. evolution in Heisenberg picture) to the map $\Phi$, and by $\mathcal{Q}$ the action of unitary conjugation by $Q$, i.e.~$\mathcal{Q}(\rho)=Q\rho Q^\dagger$.

The class of circuits we consider are illustrated in Fig.~\ref{fig:robust_parallelization}. We first prepare the system in the $\ket{0^n}$ state, and proceed to apply one layer of one-qubit random Cliffords and one layer of random Pauli matrices. After that, we apply the target channel $k$ times. This is followed by another layer of random Paulis and $1-$qubit Cliffords, which are drawn independently from the previous ones.

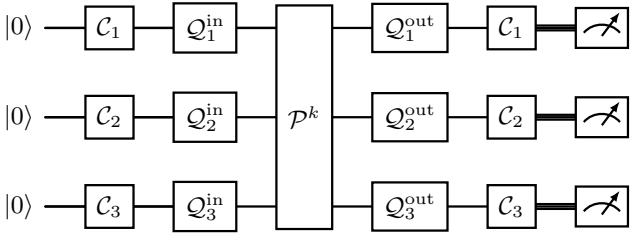
\begin{figure}
    \centering
    \resizebox{\columnwidth}{!}{

\begin{quantikz}
\lstick{$\ket{0}$} & \gate{\mathcal{C}_1} \qw  & \gate{\cQ_1^{\operatorname{in}}}\qw & \gate[3]{\cP^k} \qw & \gate{\cQ_1^{\operatorname{out}}}&\gate{\cC_1}& \meter   \cw\\
\lstick{$\ket{0}$} & \gate{\mathcal{C}_2} \qw  & \gate{\cQ_2^{\operatorname{in}}}\qw & & \gate{\cQ_2^{\operatorname{out}}}&\gate{\cC_2}& \meter   \cw\\
\lstick{$\ket{0}$} & \gate{\mathcal{C}_3} \qw  & \gate{\cQ_3^{\operatorname{in}}}\qw & & \gate{\cQ_3^{\operatorname{out}}}&\gate{\cC_3}& \meter   \cw
\end{quantikz}}
    \caption{Class of circuits we use for learning the Pauli eigenvalues on three qubits in the absence of SPAM. Here, $\mathcal{C}_i$ are independent uniformly distributed random one-qubit Clifford unitaries, $P_i,Q_i$ are uniform independent random Pauli matrices.}
    \label{fig:robust_parallelization}
\end{figure}

The post-processing of this data will be inspired by both the shadow process of~\cite{franca2022efficient} and the character randomized benchmarking of~\cite{francca2018approximate,helsen2019new}.
For a target Pauli string $P$, we define the function $\chi_P:\mathds{P}_n\to\{-1,1\}$ given by 
\begin{align}
    \chi_P(Q)=(-1)^{P\cdot Q}.
\end{align}
We also define the $P$-Twirling $\cT_P$ as 
\begin{align}\label{equ:twirling_Pauli}
\cT_P(X)\coloneqq \frac{1}{|\mathds{P}_n|}\sum_{Q\in \mathds{P}_n}\chi_P(Q)\,QXQ.
\end{align}
It follows from standard results of representation theory that $\cT_P(X)=2^{-n}\tr{PX}P$.
Furthermore, given a Pauli string $P$, an $n$ qubit Clifford unitary channel $\mathcal{C}=\otimes_{i=1}^n\mathcal{C}_i$ and bitstring $x\in\{0,1\}^n$ we define the function $\omega_P$ as 
\begin{align}
\omega_{P}(\cC,x)\coloneqq 3^{w(P)}\bra{x}C^\dagger PC\ket{x}\bra{0^n}C^\dagger PC\ket{0^n}\,,
\end{align}
Given a sample generated from the circuit as in Fig.~\ref{fig:robust_parallelization} where we used the channel $k$ times with the corresponding random Cliffords and Paulis and a reference Pauli string $P$, we consider the random variable 
\begin{align}\label{eq:RV}
\Omega_k\coloneqq \omega_{P}(\cC,X_k)\,\chi_{P}(Q^{\operatorname{in}})\,\chi_{P}(Q^{\operatorname{out}})\,,
\end{align}
where $X_k$ corresponds to the output to the circuit in Figure \ref{fig:robust_parallelization}. Let $\alpha^k_P\coloneqq 2^{-n}\tr{P\cP^k(P)}$ and $\alpha_P=\alpha_P^1$. We will now show that $\mathbb{E}[\Omega_k]=C\alpha_P^k$ for some constant $C$ that only depends on the amount of SPAM errors. Furthermore, we have that the variance of $\Omega_k$ is bounded as $\mathbb{V}(\Omega_k)\leq 3^{w(P)}$.
Thus, by using a median of means estimator, we can ensure that we obtain estimates of $C\alpha_P^k$ for different choices of $P$. By then fitting the curve $k\mapsto C\alpha_P^k$ for various values of $k$, as we will describe in more detail below, it is possible to obtain an estimate of $\alpha_P$ that is robust to SPAM errors.

Before we proceed to prove the claims made in the previous paragraph, let us first make some assumptions about how the noise acts. We will model the SPAM noise and the noise incurred by the random Paulis by assuming that between those two steps, there are channels $\Phi_1,\Phi_2$ acting on the systems. This is illustrated in Fig.~\ref{fig:robust_parallelization_noisy}. Note that we assume that the noise is independent of the gates and independent of $k$. The latter assumption is justified by the fact that the circuit we need to implement for the measurement and preparations is independent of $k$. The former is a standard assumption in RB, but recent works~\cite{helsen2022general} have relaxed it to weakly dependent noise. Generalizing it to this setting should be possible here as well, but would go beyond the scope of this work.
\begin{figure}
    \centering
    \resizebox{\columnwidth}{!}{
\begin{quantikz}
\lstick{$\ket{0}$} & \gate{\mathcal{C}_1} \qw &\gate[3]{\Phi_1}  & \gate{\cQ_1^{\operatorname{in}}}\qw & \gate[3]{\cP^k}  \qw & \gate{\cQ_1^{\operatorname{out}}}& \gate[3]{\Phi_2}&\gate{\cC_1}& \meter   \cw\\
\lstick{$\ket{0}$} & \gate{\mathcal{C}_2} \qw  & & \gate{\cQ_2^{\operatorname{in}}}\qw & & \gate{\cQ_2^{\operatorname{out}}}& &\gate{\cC_2}& \meter   \cw\\
\lstick{$\ket{0}$} & \gate{\mathcal{C}_3} \qw  & & \gate{\cQ_3^{\operatorname{in}}}\qw & & \gate{\cQ_3^{\operatorname{out}}}& &\gate{\cC_3}& \meter   \cw
\end{quantikz}}
    \caption{Structure of our noisy circuits for $3$ qubits in the presence of SPAM.}
    \label{fig:robust_parallelization_noisy}
\end{figure}
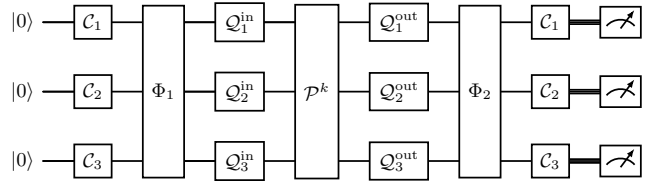
We now have:
\begin{prop}\label{prop:mean_variance}
Let $\Omega_k$ be the random variable defined in Equation \ref{eq:RV}, where we draw $\cC,Q^{\operatorname{in}}$ and $Q^{\operatorname{out}}$ uniformly and independently at random. Then
\begin{align}
\mathbb{E}[\Omega_k]=C_P\,\alpha_P^k
\end{align}
and 
\begin{align}
\mathbb{V}(\Omega_k)\leq {3^{w(P)}}\,,
\end{align}
where $C_P$ is given by 
\begin{align}
C_P\coloneqq 2^{-2n}\tr{\Phi_1^*(P)P}\tr{\Phi_2^*(P)P}.
\end{align}
\end{prop}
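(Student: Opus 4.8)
The plan is to compute $\mathbb{E}[\Omega_k]$ by taking expectations over the three independent sources of randomness — the random input Pauli $Q^{\operatorname{in}}$, the random output Pauli $Q^{\operatorname{out}}$, and the random single-qubit Cliffords $\mathcal{C}$ together with the measurement outcome $X_k$ — in a convenient order. First I would average over $Q^{\operatorname{in}}$ and $Q^{\operatorname{out}}$. Writing out the circuit of Figure~\ref{fig:robust_parallelization_noisy}, the state just before $\mathcal{P}^k$ is $\mathcal{Q}^{\operatorname{in}}\circ\Phi_1\circ\mathcal{C}(\ketbra{0^n}{0^n})$, and the effect of the $\chi_P(Q^{\operatorname{in}})$ weighting together with the uniform average over $Q^{\operatorname{in}}$ is precisely the $P$-twirl $\mathcal{T}_P$ defined in \eqref{equ:twirling_Pauli}; by the stated identity $\mathcal{T}_P(X)=2^{-n}\tr{PX}P$. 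So averaging over $Q^{\operatorname{in}}$ replaces the pre-channel state by $2^{-n}\tr{P\,\Phi_1\circ\mathcal{C}(\ketbra{0^n}{0^n})}\,P = 2^{-n}\tr{\Phi_1^*(P)\,\mathcal{C}(\ketbra{0^n}{0^n})}\,P$. Likewise, averaging over $Q^{\operatorname{out}}$ with the weight $\chi_P(Q^{\operatorname{out}})$ twirls the post-channel state, producing a factor $2^{-n}\tr{\Phi_2^*(P)\,\cdot}$ applied to whatever comes out, and crucially it feeds the Pauli $P$ into the channel $\mathcal{P}^k$ from both sides: since $\mathcal{P}$ is Pauli-diagonal, $\mathcal{P}^k(P) = \alpha_P^k\,P$, which is where the factor $\alpha_P^k$ comes from. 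After these two averages one is left with $2^{-2n}\,\alpha_P^k\,\tr{\Phi_1^*(P)\mathcal{C}(\ketbra{0^n}{0^n})}\,\tr{\Phi_2^*(P)\,\mathcal{C}^\dagger(\text{measurement effect})}$ times the classical-shadow weight $\omega_P(\mathcal{C},X_k)$, still to be averaged over $\mathcal{C}$ and $X_k$.

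The second step is the average over the single-qubit Cliffords and the measurement outcome, which is exactly the classical-shadows calculation of~\cite{franca2022efficient}. The key fact is that for a product of single-qubit Cliffords, $\mathbb{E}_{\mathcal{C}}\,\mathbb{E}_{X\sim\text{meas}}\big[\,3^{w(P)}\bra{X}C^\dagger P C\ket{X}\,(\cdot)\,\big]$ acts as the inverse of the single-qubit depolarizing-type measurement channel on the Pauli $P$; concretely $\mathbb{E}_{\mathcal{C},X}\big[\omega_P(\mathcal{C},X)\,\bra{0^n}C^\dagger A C\ket{0^n}\big]$ extracts $2^{-n}\tr{PA}$ for any operator $A$, because $\bra{0^n}C^\dagger P C\ket{0^n}$ is the surviving component. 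Applying this with $A = \Phi_1^*(P)$ on one side and noting that the $\bra{0^n}C^\dagger P C\ket{0^n}$ inside $\omega_P$ pairs with the $\Phi_1^*$-trace term already produced, the Clifford average collapses everything to $2^{-n}\tr{\Phi_1^*(P)P}$ and $2^{-n}\tr{\Phi_2^*(P)P}$, giving exactly $\mathbb{E}[\Omega_k] = 2^{-2n}\tr{\Phi_1^*(P)P}\tr{\Phi_2^*(P)P}\,\alpha_P^k = C_P\,\alpha_P^k$ as claimed. For the variance bound, I would simply use $\mathbb{V}(\Omega_k)\le \mathbb{E}[\Omega_k^2]$ and bound $\Omega_k^2$ pointwise: $|\chi_P(Q^{\operatorname{in}})\chi_P(Q^{\operatorname{out}})|=1$, and $\omega_P(\mathcal{C},X)^2 = 9^{w(P)}\bra{X}C^\dagger PC\ket{X}^2\bra{0^n}C^\dagger PC\ket{0^n}^2 \le 9^{w(P)}$; then one gains a factor $3^{-w(P)}$ from the Clifford average of one of the squared overlaps (on the qubits where $P$ is non-identity, $\mathbb{E}_{\mathcal{C}_i,X_i}\bra{X_i}C_i^\dagger P_i C_i\ket{X_i}^2 = 1/3$), leaving $\mathbb{E}[\Omega_k^2]\le 9^{w(P)}\cdot 3^{-w(P)} = 3^{w(P)}$.

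The main obstacle is bookkeeping the order of the two independent averages carefully so that the single $\bra{0^n}C^\dagger PC\ket{0^n}$ factor sitting inside $\omega_P$ is correctly matched against the $\Phi_1^*(P)$-trace left over from the input twirl, rather than double-counting or mismatching it with the $\Phi_2^*$ side; getting the constant $C_P$ to come out as a clean product of the two traces, rather than some entangled expression, relies on the independence of $Q^{\operatorname{in}}$, $Q^{\operatorname{out}}$ and $\mathcal{C}$ and on the Pauli-diagonality of $\mathcal{P}$ decoupling the $k$-dependence into the scalar $\alpha_P^k$. I would also take care that the random measurement on $\ket{0^n}$ after the final Clifford layer is handled consistently with the classical-shadows convention (outcome $X_k$ distributed as $|\bra{X_k}C\cdot C^\dagger|0^n\rangle|^2$-type weights), since the whole argument hinges on $\omega_P$ being an unbiased inverse of that measurement map; the variance estimate is then routine given the pointwise bounds.
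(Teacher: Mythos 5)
Your proposal is correct and follows essentially the same route as the paper: both hinge on the twirl identity $\cT_P(X)=2^{-n}\operatorname{tr}(PX)P$ to produce the factor $C_P\alpha_P^k$ and on the fact that $\omega_P$ is supported only on the probability-$3^{-w(P)}$ event that $\cC$ diagonalizes $P$, which gives $\mathbb{E}[\Omega_k^2]\le 9^{w(P)}\cdot 3^{-w(P)}=3^{w(P)}$. The only difference is cosmetic — you average over the Paulis before the Cliffords, whereas the paper conditions on the Clifford layer first and reads the experiment as preparing a $P$-eigenstate and measuring the $\{P_+,P_-\}$ POVM.
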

\begin{proof}
We will first fix the values of $Q^{\operatorname{in}}$ and $Q^{\operatorname{out}}$ and consider only the expectation values w.r.t. $\cC$. Note that $\omega_P(\cC,X_k)$ is only nonzero if $\cC$ maps the computational basis to the eigenbasis of $P$, a fact we will use repeatedly. As the local Cliffords are uniformly random, this happens with probability $3^{-w(P)}$. Let $P_+,P_-$ be the projectors onto the positive and negative eigenvalues of $P$. Again by the uniformity of the Cliffords, conditioned on the fact that we mapped to the eigenbasis of $P$ under the Cliffords, we prepare the initial state $\tfrac{P_+}{2^{n-1}}$ with probability $1/2$ and $\tfrac{P_-}{2^{n-1}}$ with probability $1/2$. We then evolve this state by $\Phi_2\cQ^{\operatorname{out}}\cP^k\cQ^{\operatorname{in}}\Phi_1$ and measure the POVM $\{P_+,P_-\}$. By the definition of the random variable $\omega_P\equiv \omega_P(\cC,X_k)$ we have that its expectation value conditioned on $\cC$ mapping onto the eigenbasis of $P$ is given by 
\begin{align*}
\mathbb{E}[\omega_P|\cC]&=3^{w(P)}2^{-n}\Big(\tr{P_+\Phi_2\cQ^{\operatorname{out}}\cP^k\cQ^{\operatorname{in}}\Phi_1(P_+)}\\
&\qquad \qquad +\tr{P_-\Phi_2\cQ^{\operatorname{out}}\cP^k\cQ^{\operatorname{in}}\Phi_1(P_-)}\\
&\qquad \qquad  -\tr{P_-\Phi_2\cQ^{\operatorname{out}}\cP^k\cQ^{\operatorname{in}}\Phi_1(P_+)}\\
&\qquad \qquad -\tr{P_+\Phi_2\cQ^{\operatorname{out}}\cP^k\cQ^{\operatorname{in}}\Phi_1(P_-)}\Big)\\
&=3^{w(P)}2^{-n}\tr{P\Phi_2\cQ^{\operatorname{out}}\cP^k\cQ^{\operatorname{in}}\Phi_1(P)}
\end{align*}
As the probability of hitting the right basis is $3^{-w(P)}$, it will cancel out the $3^{w(P)}$. Let us now analyze the effect of the additional Pauli twirling.
By definition of the Twirling (see Eq.~\eqref{equ:twirling_Pauli}), if we take the expectation value over $Q^{\operatorname{in}}$ and $Q^{\operatorname{out}}$ now, we obtain:
\begin{align}
 \mathbb{E}[\Omega_k] =  2^{-n}\tr{\cT_P^*(\Phi_2^*(P)) \cP^k\circ \cT_P(\Phi_1(P))}.
\end{align}
Note that $\cT_P^*=\cT_P$ and, thus
\begin{align}
 \cT_P^*(\Phi_2^*(P))=2^{-n}\tr{\Phi_2^*(P)P}P
\end{align}
and similarly
\begin{align}
\cT_P(\Phi_1(P))=2^{-n}\tr{\Phi_1(P)P}P=2^{-n}\tr{\Phi_1^*(P)P}P,
\end{align}
which proves our claim about the expected value. The estimate of the variance follows again from noting that the random variable $\Omega_k$ is nonzero only if $\cC$ maps the computational basis to an eigenbasis of $P$, which happens with probability $3^{-w(P)}$. In that case, the value of the random variable squared is bounded by $9^{w(P)}$, which gives that the expected value of $\Omega_k^2$ is bounded by $3^{w(P)}$. Our bound on the variance follows.
\end{proof}
Some remarks are in order: first, note that as $P$ is an eigenvector of $\cP$, $Q^{\operatorname{in}}$ has no effect in the protocol. However, this extra step ensures that small deviations from $\cP$ being a Pauli channel will be suppressed. Furthermore, it should be straightforward to obtain a similar protocol for the case where input and output Pauli matrices are not the same at the expense of a higher sample complexity in terms of the locality.

The following corollary follows directly from the joint use of Proposition \ref{prop:mean_variance} and a median of means argument:
\begin{cor}\label{ref:corSPAMlearning}
In the same setting as Prop.~\ref{prop:mean_variance}, let $P_1,\ldots,P_m$ be $m$ Pauli strings such that $w(P_i)\leq w$, denote by $p_0$ the probability that no error occurs, i.e.~$p_0=\mu(\mathds{I})$, and  
\begin{align}
2^{-2n}|\tr{\Phi_1^*(P_i)P_i}\tr{\Phi_2^*(P_i)P_i}|\geq C_{\operatorname{SPAM}}.
\end{align}
Then $\mathcal{O}(3^w\epsilon^{-2}C_{\operatorname{SPAM}}^{-2}\log(m\delta^{-1}))$ runs of circuits of the form of Fig.~\ref{fig:robust_parallelization_noisy} suffice to estimate all $\alpha_{P_i}$ up to an error $\epsilon(1-p_0)$ with probability of success at leat $1-\delta$. Furthermore, the maximal value of $k$ needed in each run satisfies $k=\mathcal{O}(\log(1/(1-p_0)))$.
\end{cor}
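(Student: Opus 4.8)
The plan is to read the corollary off Proposition~\ref{prop:mean_variance} via a median-of-means concentration bound, followed by a short randomized-benchmarking-style fit, while exploiting that one pool of runs serves every target Pauli.

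\emph{Step 1: concentration at a fixed depth.} Fix a circuit depth $k$. For each target string $P_i$, the single-shot statistic $\Omega_k$ of Equation~\eqref{eq:RV} is obtained by classically post-processing the outcome of one run of the circuit of Fig.~\ref{fig:robust_parallelization_noisy}, and the circuit that is executed does not depend on $i$ — only the post-processing weights $\omega_{P_i},\chi_{P_i}$ do. By Proposition~\ref{prop:mean_variance} this statistic has mean $C_{P_i}\alpha_{P_i}^{k}$ and variance at most $3^{w(P_i)}\le 3^{w}$. Splitting $\mathcal{O}(3^{w}t^{-2})$ runs into $\mathcal{O}(\log(m/\delta))$ blocks, averaging within each block and taking the median across blocks, Chebyshev's inequality makes each block-average $t$-accurate with probability $\ge 2/3$, and the median drives the failure probability below $\delta/(3m)$; a union bound over the $m$ choices of $i$ shows that a single pool of $\mathcal{O}(3^{w}t^{-2}\log(m/\delta))$ runs simultaneously produces estimates $\widehat b_{i,k}$ of all $C_{P_i}\alpha_{P_i}^{k}$ with additive error $t$, with overall probability $\ge 1-\delta/3$. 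The absence of any genuine factor of $m$ (only $\log m$) is exactly the ``process-shadow'' parallelization feature.

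\emph{Step 2: extracting $\alpha_{P_i}$ robustly.} I would run Step~1 at a geometric family of depths $k=1,2,4,\dots,k_{\max}$ (a constant-factor-plus-$\log\log$ overhead, absorbed into the $\mathcal{O}$), obtaining $\widehat b_{i,k}=C_{P_i}\alpha_{P_i}^{k}+O(t)$ for each $i$ and each such $k$. Because $\mathds{I}$ commutes with every Pauli, $\mu(\mathds{I})=p_0$ forces $1-\alpha_{P_i}\le 2(1-p_0)$, so over this range the exponentials $\alpha_{P_i}^{k}$ stay bounded away from $0$ by an absolute constant once $k_{\max}=\mathcal{O}(\log(1/(1-p_0)))$, which keeps the fit well-conditioned. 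Fitting $k\mapsto C\lambda^{k}$ — equivalently, reading $\lambda=\alpha_{P_i}$ off ratios such as $\widehat b_{i,2k}/\widehat b_{i,k}$ followed by a root — makes the SPAM prefactor $C_{P_i}$ cancel, which is the source of SPAM-robustness. Propagating errors: forming the ratio divides by $|C_{P_i}\alpha_{P_i}^{k}|\ge (\mathrm{const})\,C_{\operatorname{SPAM}}$, so it inflates the error by $O(C_{\operatorname{SPAM}}^{-1})$; taking the root contributes a benign $1/k$; balancing the result against the target $\epsilon(1-p_0)$, again using $1-\alpha_{P_i}\le 2(1-p_0)$, shows $t=\Theta(\epsilon\,C_{\operatorname{SPAM}})$ suffices. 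Strings $P_i$ whose decay is too slow to resolve within the depth budget automatically have $\alpha_{P_i}$ within $\epsilon(1-p_0)$ of the trivial value and can be reported as such.

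\emph{Step 3: bookkeeping.} Substituting $t=\Theta(\epsilon\,C_{\operatorname{SPAM}})$ into Step~1 and accounting for the $\mathcal{O}(\log k_{\max})$ depths gives $\mathcal{O}(3^{w}(\epsilon\,C_{\operatorname{SPAM}})^{-2}\log(m/\delta))$ total runs, each of which applies $\cP$ at most $k_{\max}=\mathcal{O}(\log(1/(1-p_0)))$ times, and all $\alpha_{P_i}$ are recovered to error $\epsilon(1-p_0)$ with probability $\ge 1-\delta$, as claimed. I expect the main obstacle to be Step~2: turning the noisy estimates of $C\alpha^{k}$ at the chosen depths into a \emph{provable}, $C_{P}$-independent error bound on $\alpha$ under the stated depth budget — i.e.~making the curve fit rigorous, pinning down the depth schedule, and tracking the error through the ratio-and-root extraction. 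Step~1 is routine concentration, and the parallelization point, though essential for the $\log m$ scaling, is immediate once one notes that the data-generating circuit is oblivious to which $P_i$ one later post-processes for.
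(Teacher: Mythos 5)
Your proposal follows essentially the same route as the paper: a median-of-means estimator built on the mean and variance bounds of Proposition~\ref{prop:mean_variance}, with the same circuit data reused for all $m$ Pauli strings (hence only a $\log m$ cost), followed by an exponential curve fit in $k$ to cancel the SPAM prefactor $C_{P}$. The only difference is that the paper delegates your Step~2 — the rigorous extraction of $\alpha_{P}$ from noisy estimates of $C_{P}\alpha_{P}^{k}$ within the stated depth budget — to the fitting analysis of~\cite{PhysRevA.99.052350}, whereas you sketch it directly via a ratio-and-root argument; your identification of that step as the one needing care is exactly right.
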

\begin{proof}
From Prop.~\ref{prop:mean_variance} it follows that using a median of means estimator~\cite{Devroye2016} we can estimate the mean of all random variables with expectation value $C_{P_i}\alpha_{P_i}^k$ for a given $k$ up to $C_{\textrm{SPAM}}\epsilon$ and probability of success at least $1-\delta$ with the advertised sample complexity. We can then use the results of~\cite{PhysRevA.99.052350} to fit these values to an exponential curve and obtain an estimate that satisfies the error bound $\epsilon(1-p_0)$ by using the channel $k=\mathcal{O}(-\log(1-p_0))$ times.
\end{proof}
Thus, by combining the randomized experiments with an exponential fitting, it is possible to recover all Pauli coefficients in a SPAM-robust way.
Note that $C_{\textrm{SPAM}}$ will usually also scale exponentially in the locality of the Pauli string for local noise models. E.g., for local depolarizing noise with depolarizing probability $q$, $C_{\textrm{SPAM}}\geq(1-q)^{2w}$.

\end{document}